\theoremstyle{plain}
\newtheorem{thm}{\protect\theoremname}
\theoremstyle{remark}
\newtheorem{rem}{\protect\remarkname}
\theoremstyle{plain}
\newtheorem{prop}{\protect\propositionname}
\theoremstyle{plain}
\newtheorem{lem}{\protect\lemmaname}
\providecommand{\lemmaname}{Lemma}
\providecommand{\propositionname}{Proposition}
\providecommand{\remarkname}{Remark}
\providecommand{\theoremname}{Theorem}
\begin{document}
\title{Learning Rate Optimization for Federated Learning Exploiting Over-the-air Computation}
\author{Chunmei Xu,~\IEEEmembership{Student~Member,~IEEE},
        Shengheng~Liu,~\IEEEmembership{Member,~IEEE},
        Zhaohui~Yang,~\IEEEmembership{Member,~IEEE},
        Yongming~Huang,~\IEEEmembership{Senior Member,~IEEE},
        Kai-Kit~Wong,~\IEEEmembership{Fellow,~IEEE}
        \vspace{-1.2em}


\thanks{C.~Xu, S.~Liu, and Y.~Huang are with the School of Information Science and Engineering, Southeast University, Nanjing 210096, China, and also with the Purple Mountain Laboratories, Nanjing 211111, China (e-mail: \{xuchunmei; s.liu; huangym\}@seu.edu.cn).}

\thanks{Z.~Yang is with the Centre for Telecommunications Research, Department of Engineering, King's College London, WC2R 2LS, UK, (e-mail: yang.zhaohui@kcl.ac.uk).}


\thanks{K.-K.~Wong is with the Department of Electronic and Electrical Engineering, University College London, London WC1E 6BT, United Kingdom (e-mail: kai-kit.wong@ucl.ac.uk).}
}

\maketitle
\begin{abstract}
Federated learning (FL) as a promising edge-learning framework can effectively address the latency and privacy issues by featuring distributed learning at the devices and model aggregation in the central server. {In order to enable efficient wireless data aggregation, over-the-air computation (AirComp) has recently been proposed and attracted immediate attention. However, fading of wireless channels can produce aggregate distortions in an AirComp-based FL scheme. To combat this effect, the concept of dynamic learning rate (DLR) is proposed in this work.} We begin our discussion by considering multiple-input-single-output (MISO) scenario, since the underlying optimization problem is convex and has closed-form solution. We then extend our studies to more general multiple-input-multiple-output (MIMO) case and an iterative method is derived. Extensive simulation results demonstrate the effectiveness of the proposed scheme in reducing the aggregate distortion and guaranteeing the testing accuracy using the MNIST and CIFAR10 datasets. In addition, we present the asymptotic analysis and give a {near-optimal} receive beamforming design solution in closed form, which is verified by numerical simulations.
\end{abstract}

\begin{IEEEkeywords}
Distributed algorithm, federated learning, over-the-air computation, learning rate, beamforming.
\end{IEEEkeywords}

\section{Introduction}

Future sixth-generation (6G) {communication networks are} envisioned to undergo {a profound} transformation, which {evolves from} \emph{connected
things} to \emph{connected intelligence} with more stringent requirements {such as dense networking, strict security, high energy efficiency, and high intelligence \cite{Roadmap, Huang21}. Artificial intelligence (AI) technologies, which allows automatic analysis of a large {mass} of data generated in wireless networks and subsequent optimization of highly dynamic and complex network \cite{6Gvision, deep_survey, Xu21}, will shape the landscape of 6G. Conversely, 6G will give renewed impetus to the AI-empowered mobile applications by supplying the} advanced wireless communications and mobile computing technologies \cite{what6gbe} {as supporting infrastructure}.

{AI tasks entail increasingly intensive computation workloads. Hence, they are generally migrated to and trained on the server center with sufficient computation resources and} the availability of data {that} is first collected from the devices/sensors and then {uploaded to the center} \cite{Cloud_ML1,Cloud_ML2}. The data volume can be considerably
large and, thus, imposing heavy transmission traffic burden and increasing
the latency. {Another critical problem comes from the serious concern of privacy leakage,} since the generated data, e.g., photos,
social-networking records, at the devices are often privacy sensitive.
An intuitive way to counteract the above issues would be to conduct
training and inference process directly at the network edge, such
as devices and sensors, using locally generated real-time data. The
{paradigm} edge learning has unique advantages
of balanced resource support, proximity to data sources compared with
cloud learning and higher learning accuracy compared {to} on-device
learning by harnessing the computation and storage capacities \cite{edge,edge3}.

{Federated learning (FL) tackles the aforementioned concerns by} collaboratively
training a shared global model with locally stored data \cite{2016Communication,FL_6G, FL_6G_2, FL_IoT}.
A typical FL algorithm alternates between two iterative phases: (I) The devices
receive the global model from the edge server and train local models
with locally stored data; (II) These local models are transmitted
to and aggregated at the edge server to yield the global model. Note
that the data volume of the local models (may consist of millions
of parameters) are much smaller than {the raw data.} {Nonetheless}, the
local {models} uploaded by {legion of} participating devices via wireless
links is resource-demanding, which {is} the main bottleneck
to implement FL in practice. In this regard, developing communication-efficient
methods are of paramount importance. Some recent works have considered
asynchronous mechanism \cite{Parallelism}, quantization \cite{quantization,quantization1},
sparsification \cite{compression,compression1}, and aggregate frequency
\cite{Nishio} to reduce the transmission overhead, which, however,
ignore the aspects of physical and network layers.

{In the second phase of FL, the edge server averages} the local model parameters from {the} distributed devices, which {is essentially} wireless data aggregation. Conventional {multiple-access} schemes, such as orthogonal frequency-division multiple access
(OFDM), are based on the separated-communication-and-computing principle.
In \cite{TDMA}, a time-division multiple access (TDMA) system was
considered, where a joint batchsize selection and communication resource
allocation scheme was developed aiming at accelerating the training
process and improving the learning efficiency. The impact of three
different scheduling policies on the FL performance were analytically
studied in the large-scale wireless networks \cite{scheduling}. Such
sub-optimal communication-and-computation approaches could result
in a sharp rise in consumption of wireless resources as well as congesting
the air-interface \cite{edge}. {Very Recently, the over-the-air computation (AirComp) scheme was proposed by leveraging the waveform superposition property of multiple-access channels, which is fundamentally different from the traditional separated-communication-and-computing principle \cite{zhu2020overtheair}.} By aggregating the data simultaneously received from distributed devices in an analog manner, the AirComp technique can further improve communication efficiency
\cite{Gunduz, Gunduz2, Broadband}.

The AirComp technique has recently been applied to implementing FL. Specifically, a gradient sparsification and random linear projection based approach is proposed and the reduced data was transmitted via AirComp to address the bandwidth and power limitations, which outperformed its digital
counterpart \cite{Gunduz,Gunduz2}. As a matter of fact, the distortions caused
by fading and noisy channels are critical for learning tasks as a
large aggregation error may lead to the degradation of inference accuracy.
In \cite{Broadband}, the transmission power was designed by truncated
channel inversion, and two scheduling schemes were proposed to guarantee
the identical amplitudes of the received signals among devices in
a single-input-single-output (SISO) system to reduce the aggregate
error. {Meanwhile in \cite{kaiyang}, joint device selection and receive beamforming design was investigated in single-input-multiple-output (SIMO) configuration, and} a novel unified difference-of-convex {(DC)} function was proposed. {On the other hand, the problem of distortion minimization in an intelligent reflection surface (IRS)-aided cloud radio access network (CRAN) system was addressed in \cite{IRS}, where a joint optimization scheme of the passive beamforming and linear detection vector was designed. Furthermore, with} the aid of multi-IRS, a novel resource and device selection method was developed to minimize the aggregate error as
well as maximize the selected devices \cite{WanliNi}. The existing
works utilized wireless resources, such as power control, device selection
and beamforming design, as well as channel configuration to align
the received signals from distributed devices. Nevertheless, they did not fully unleash
the potential of hyper-parameters in the perspective of machine learning
(ML).

Learning rate is a key hyper-parameter that determines the convergence
and the convergence rate of the learning tasks. A large learning rate will hinder convergence and cause loss function around the minimum or even to diverge, while
too small a learning rate will lead to slow convergence \cite{lr_trick}. To select an optimal learning rate is always critical, yet tricky issue for learning algorithms to work properly. One feasible approach is to adopt learning rate schedulers, which is able to adjust the learning rate
training online. However, it has to be designed in advance and is unable
to adapt to the characteristics of the dataset \cite{scheduler_lr}.
{Adaptive learning rates such as Adagrad can adapt to the data and change with the gradients, which are most widely used in deep learning community \cite{adapt_lr}.} {Later, cyclical} learning rate (CLR) was {proposed} to allow the learning rate cyclically vary between reasonable boundary values, which {incurs less computational cost and can} significantly enhance the learning performance \cite{cyclical_lr}. The essence behind CLR originates from the observation that increasing
the learning rate allows more rapid traversal of saddle point plateaus
and thus achieves a longer term beneficial effect. Inspired by this study,
we propose dynamic learning rate (DLR) between the minimum and maximum
boundaries to adapt to the fading channels, in order to further reduce
the aggregate error caused by the fading and noisy channels.

In this paper, we consider FL for AI-empowered mobile applications,
such as e-health services, which will be supported by 6G networks.
Instead of adopting conventional separated-communication-and-computation
pattern, we incorporate AirComp to aggregate local models from distributed
devices so as to improve the communication efficiency. In AirComp-based schemes, minimizing the resultant aggregate distortion is of paramount
importance as a large distortion can {spawn} performance degradation
of AI tasks. To mitigate the wireless distortion measured by mean
square error (MSE), we first propose to utilize a DLR scheme to adapt
to the wireless channels, and receive beamforming
optimization is jointly considered.  The technical contributions of this work are summarized below.
\begin{itemize}
\item To our best knowledge, this is the first work to study
the use of DLR for FL over wireless communications to reduce the
aggregate error, which is fundamentally different from existing
works which only consider the optimization of wireless resources. We
analytically show that DLR can be properly designed to mitigate the
distortion caused by fading. It is also proved that the MSE can be
further decreased by considering DLR.
\item For MSE minimization via AirComp, we jointly optimize the DLR ratios
and wireless resources. Both MISO and MIMO scenarios are considered, and the respective closed-form solution and iterative algorithm are developed. Extensive simulation results demonstrate
the effectiveness of the proposed scheme in further reducing the MSE as well as {improving} the learning performance on MNIST and CIFAR10 datasets.
\item In addition, we present the asymptotic beamforming solution in closed form when the number of transmit/receive antennas tends to infinity. Simulation results verify the theoretical analysis as well as the receive beamforming design.
\end{itemize}

The outline of this paper is organized as follows: Some necessary mathematical descriptions of FL and AirComp are presented in Section~\ref{S:pre}. The concept of DLR is introduced in Section~\ref{S:DLR}. In Section~\ref{S:ProFor} and Section~\ref{S:DLRsov}, the DLR optimization problems in MISO and MIMO scenarios are respectively formulated and solved. Next, we present the theoretical asymptotic analysis and, on this basis, propose a {near-optimal} and closed-form receive beamforming solution in Section~\ref{S:AAna}.
Then, in Section~\ref{S:SR}, numerical simulations are provided to showcase the advantages of the proposed scheme. Finally, the paper is concluded in Section~\ref{S:conclu}.

\section{Preliminary}\label{S:pre}

{In this work, we consider the problem of FL over wireless networks. The configuration of the system under investigation is depicted in Fig.~\ref{fig:System-model}. The wireless network consist} of $K$ devices with $N_{d}$ antennas each and an aggregator with $N_{t}$ antennas. {The set of devices are denoted as $\mathcal{K}$. }Each device
$k$ updates its model based on locally distributed data $\mathcal{D}_{k}$, which cannot be shared with other entities out of latency and privacy
concerns.

\begin{figure}[!htpb]
\centering
\includegraphics[width=0.99\columnwidth]{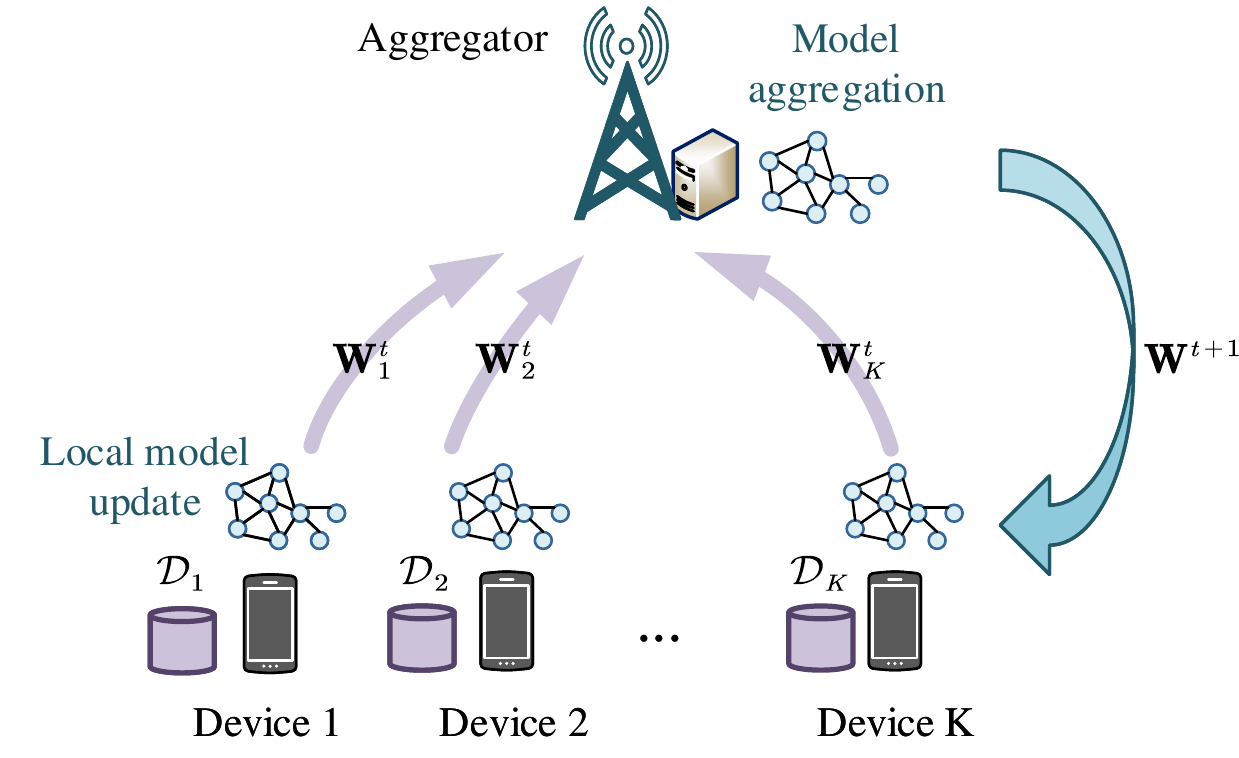}
\caption{An FL system over wireless communication.}
\label{fig:System-model}
\end{figure}

\subsection{FL}

FL has recently emerged as an effective distributed approach to enable
wireless devices to collaboratively build a shared learning model
with training taken place locally. The objective of FL is to minimize
the aggregate loss:
\begin{equation}
\boldsymbol{w}^{o}\triangleq\underset{\boldsymbol{w}}{\mathrm{argmin}}\frac{1}{K}\sum\nolimits_{k=1}^{K}P_{k}\left(\boldsymbol{w}\right),\label{eq:centralized}
\end{equation}
where $\boldsymbol{w}\in\mathbb{R}^{D}$ is a vector containing the
model parameters, $D$ is the dimension of the FL model, $P_{k}\left(\boldsymbol{w}\right)$
is the local loss value at device $k$ based on $\mathcal{D}_{k}$, given by
\begin{equation}
P_{k}\left(\boldsymbol{w}\right)\triangleq\frac{1}{\left|\mathcal{D}_{k}\right|}\sum\nolimits_{n=1}^{\left|\mathcal{D}_{k}\right|}Q_{k}\left(\boldsymbol{w};\boldsymbol{x}_{n},y_{n}\right),
\end{equation}
where $Q_{k}$ is the loss function on the sample $\left(\boldsymbol{x}_{n},y_{n}\right)$ with $\boldsymbol{x}_{n},y_{n}$ the input and the label. To obtain the solution of (\ref{eq:centralized}),
a centralized gradient decent method is applied and the parameters
are updated as
\begin{equation}
\boldsymbol{w}^{i+1}=\boldsymbol{w}^{i}-\mu\left(\frac{1}{K}\sum\nolimits_{k=1}^{K}\boldsymbol{g}_{k}\left(\boldsymbol{w}^{i}\right)\right),\label{eq:update}
\end{equation}
where $i$ is the iteration index, $\mu$ is the learning rate, $\boldsymbol{g}_{k}(\boldsymbol{w}^{i})=\nabla_{\boldsymbol{w}}P_{k}\left(\boldsymbol{w}^{i}\right)$
is the gradient of loss with respect to $\boldsymbol{w}^{i}$. Hereinafter,
we denote $\boldsymbol{g}_{k}(\boldsymbol{w}^{i})$ by $\boldsymbol{g}_{k}$ for notation simplicity.
Since the aggregator is inaccessible to the data distributed at any
particular device $k$, the gradient term $\boldsymbol{g}_{k}$ is
calculated locally and the local model, denoted as $\boldsymbol{w}_{k}$,
is updated accordingly. By introducing the local learning rate denoted
as $\mu_{k}$, the local model at device $k$ is updated by
\begin{equation}
\boldsymbol{w}_{k}^{i+1}=\boldsymbol{w}^{i}-\mu_{k}\boldsymbol{g}_{k}.
\end{equation}
With local models received at the aggregator, the update of the global
model (\ref{eq:update}) is rewritten as
\begin{equation}
\boldsymbol{w}^{i+1}=\frac{1}{K}\sum\nolimits_{k=1}^{K}\boldsymbol{w}_{k}^{i+1}.\label{eq:update_1}
\end{equation}

\subsection{AirComp}
{As introduced earlier, AirComp integrates computation and communication by exploiting the waveform superposition property, which harnesses interference to help functional computation \cite{zhu2020overtheair}.}
The AirComp {comprises three stages}: (i) Pre-processing at the
transmitters; (ii) Superposition over the air; and (iii) Post-processing
at the receiver \cite{Overtheair}. In this work, the pre-processing
is assumed to be identity mapping. Each parameter in a local model is modulated as a symbol, then the symbol vector $\boldsymbol{s}_{k}^{i+1}\triangleq\boldsymbol{w}_{k}^{i+1}\in\mathbb{C}^{D}$ is obtained accordingly. The symbol vector is assumed to be normalized {by the} unit variance, {which is given by} $\mathbb{E}\left[\boldsymbol{s}_{k}^{i+1}\left(\boldsymbol{s}_{k}^{i+1}\right)^{\mathrm{H}}\right]=\mathbf{I}$. For notation simplicity,
the $d$-th element of $\boldsymbol{s}_{k}$, $\boldsymbol{w}^{i}$
and $\boldsymbol{g}_{k}$, i.e., $\boldsymbol{s}_{k}[d]$, $\boldsymbol{w}^{i}\left[d\right]$
and $\boldsymbol{g}_{k}[d]$, {are written as $s_{k}$,
$w^{i}$ and $g_{k}$}. As such, the desired signal based on (\ref{eq:update})
can be represented {by}
\begin{equation}
y_{\mathrm{des}}\!=\!\frac{1}{K}\!\sum\nolimits_{k=1}^{K}\!s_{k}\!=\!\frac{1}{K}\!\sum\nolimits_{k=1}^{K}\!w_{k}^{i+1}\!=\!w^{i}\!-\frac{\mu}{K}\sum\nolimits_{k=1}^{K}\!g_{k}.
\end{equation}
Considering the multiple access channel property of wireless communication,
the received signal is a linear sum of {the} transmitted signal plus uncertainty.
{Hence, after the} post-processing, the received signal at the receiver {can be expressed as}
\begin{equation}
y=\sqrt{\eta}\left(\sum\nolimits_{k=1}^{K}A_{k}s_{k}+B\right),\label{eq:aggregat_sig}
\end{equation}
where $s_{k}$ is the input of the communication system from device
$k${,} and $\eta$ is a scaling factor. The {variables} $A_{k}$ and \textbf{$B$}
depend on the {specific} scenario {settings. In particular,} we have $A_{k}=h_{k}b_{k}$ {and} $B=n$ for {SISO;} $A_{k}=\boldsymbol{h}_{k}^{\mathrm{T}}\boldsymbol{b}_{k}$ {and} $B=n$ for MISO. For SIMO and MIMO scenarios, we have $A_{k}=\boldsymbol{m}^{\mathrm{H}}\boldsymbol{h}_{k}b_{k}$,
$B=\boldsymbol{m}^{\mathrm{H}}\boldsymbol{n}$, and $A_{k}=\boldsymbol{m}^{\mathrm{H}}\boldsymbol{H}_{k}\boldsymbol{b}_{k}$,
$B=\boldsymbol{m}^{\mathrm{H}}\boldsymbol{n}$, respectively. Note
that $h_{k}\left(\boldsymbol{h}_{k},\boldsymbol{H}_{k}\right)$ is
the independent Rayleigh fading channel (vector/matrix) from device
$k$ to the aggregator, which is assumed to be block fading and {remain constant during the process of model} transmission; $b_{k}\left(\boldsymbol{b}_{k}\right)$ is
the transmit coefficient (vector) at device $k$; $\boldsymbol{m}$
is the receive beamforming vector at the aggregator; $n\left(\boldsymbol{n}\right)$
is the noise (vector) with power of $\sigma^{2}$.

Instead of obtaining the individual term $s_{k}$ first and then {averaging} at the aggregator, we {employ AirComp to estimate} the desired signal directly. The {AirComp aggregate error is defined as} the difference between the {desired signal and its estimation}, which
{is derived} as
\begin{equation}
\begin{array}{*{20}{l}}
e\triangleq \displaystyle y_{\mathrm{des}}-y=\frac{1}{K}\sum\nolimits_{k=1}^{K}s_{k}-\sqrt{\eta}\left(\sum\nolimits_{k=1}^{K}A_{k}s_{k}+B\right)\\
\;\;\,\displaystyle=\sum\nolimits_{k=1}^{K}\left(\frac{1}{K}-\sqrt{\eta}A_{k}\right)s_{k}-\sqrt{\eta}B.
\end{array}
\label{eq:error}
\end{equation}
According to (\ref{eq:error}), the aggregate distortion originates
from both fading and noise, which are reflected in the {fading-channel-related} term $\sqrt{\eta}A_{k}$ and the noise-related term \textbf{$\sqrt{\eta}B$}{,}
respectively. It is worth noting that mitigating the
distortion is of great importance, as {severe distortion can result in a} biased global model and, in turn, degrades the learning performance.

\section{DLR for Channel Adaption}\label{S:DLR}

{Existing works \cite{IRS,kaiyang,WanliNi,Broadband} minimize the aggregate error by means of wireless resource optimization, IRS-based channel reconfiguration, and device selection. These approaches correspond to optimize the transmit coefficient
(vector) $b_{k}\left(\boldsymbol{b}_{k}\right)$, the receive beamforming
vector $\boldsymbol{m}$, the scaling factor $\eta$, and the passive
beamforming vector $\Theta$, or simply select a subset of the devices. Though appears different at first glance, these approaches share a common aim, which is to align the receive signals and minimize the noise-induced error. Whereas in this work, we take a radically different perspective and propose to mitigate the distortions by optimizing the hyper-parameters in the learning process. Concretely, a strategy is designed to let the local learning rate $\text{\ensuremath{\mu_{k}}}$ adapt to the time-varying wireless environment.} Based on the aforementioned features of FL and AirComp,
we arrive at the following theorem.
\begin{thm}
\label{thm:1} Denote DLR ratio as $r_{k}=\frac{\mu_{k}}{\mu}$
for device $k$, to mitigate the distortion caused by the fading channel related term $\sqrt{\eta}A_{k}$, we have
\begin{equation}
\sqrt{\eta}\sum\nolimits_{k=1}^{K}A_{k}=1,\quad r_{k}=\frac{1}{K\sqrt{\eta}A_{k}}.\label{eq:ther}
\end{equation}
\end{thm}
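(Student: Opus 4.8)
The plan is to substitute the DLR-adjusted local updates into the AirComp signal model of Section~\ref{S:pre} and then match coefficients against the target aggregate. First I would note that, writing the local learning rate as $\mu_k=r_k\mu$, the model transmitted by device $k$ becomes, in the per-coordinate notation used above, $s_k=w_k^{i+1}=w^i-r_k\mu g_k$, whereas the quantity the aggregator is meant to reconstruct is still the common-rate global update $y_{\mathrm{des}}=w^i-\frac{\mu}{K}\sum_{k=1}^{K}g_k$ from (\ref{eq:update}). The role of the ratios $r_k$ is exactly to pre-distort each $s_k$ so that the channel-weighted superposition $\sqrt{\eta}\sum_{k}A_ks_k$ lands on $y_{\mathrm{des}}$.

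Next I would insert $s_k=w^i-r_k\mu g_k$ into the aggregate-error expression (\ref{eq:error}) and regroup the result by the round-dependent quantities $w^i$ and $g_1,\dots,g_K$, obtaining
\begin{equation}
e=\Bigl(1-\sqrt{\eta}\sum\nolimits_{k=1}^{K}A_k\Bigr)w^i+\mu\sum\nolimits_{k=1}^{K}\Bigl(\sqrt{\eta}A_kr_k-\tfrac{1}{K}\Bigr)g_k-\sqrt{\eta}B .
\end{equation}
Here $-\sqrt{\eta}B$ is the (irreducible) noise-related distortion, while the first two terms are precisely the fading-channel-related distortion that Theorem~\ref{thm:1} seeks to annihilate.

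The core step is to argue that this fading-related part must vanish for \emph{every} admissible state of the learning process — i.e.\ for all values of the current iterate $w^i$ and of the local gradients $g_k$, which change from round to round and with the data — so that no cancellation between the $w^i$ term and the gradient terms may be relied upon, and hence each coefficient has to be zeroed individually. Nulling the coefficient of $w^i$ gives $\sqrt{\eta}\sum_{k=1}^{K}A_k=1$, and nulling the coefficient of each $g_k$ gives $\sqrt{\eta}A_kr_k=\frac{1}{K}$, i.e.\ $r_k=\frac{1}{K\sqrt{\eta}A_k}$, which is exactly (\ref{eq:ther}).

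The difficulty is conceptual rather than computational. The points I expect to need care with are: (i) justifying the separate nulling of the two coefficient families from the round-to-round variability of $w^i$ and $\{g_k\}$; (ii) observing that $\sqrt{\eta}\sum_{k}A_k=1$ is a single scalar condition, always satisfiable by rescaling $\eta$ for given $A_k$, whereas without DLR one would instead need the $K$ alignment conditions $\sqrt{\eta}A_k=1/K$ — it is exactly this extra slack that lets DLR cancel the fading distortion completely; and (iii) checking that the resulting $r_k$ is well defined, i.e.\ $A_k\neq 0$ (which holds almost surely under Rayleigh fading and is enforceable by the beamforming design) and any phase of $A_k$ is absorbed by the transmit coefficient so that $r_k$ remains real. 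I would close by remarking that under (\ref{eq:ther}) the error collapses to $e=-\sqrt{\eta}B$, so the residual MSE equals $\eta\,\mathbb{E}\!\left[|B|^2\right]$, which is precisely the objective that the joint optimization over $\{r_k\}$, the transmit coefficients, and $\boldsymbol{m}$ in the subsequent sections minimizes.
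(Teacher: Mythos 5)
Your proposal is correct and follows essentially the same route as the paper: substitute $s_k=w^i-\mu_k g_k$ into the error, regroup by $w^i$ and the gradients $g_k$, and null the two coefficient families to obtain $\sqrt{\eta}\sum_k A_k=1$ and $r_k=\frac{1}{K\sqrt{\eta}A_k}$. Your added remarks (why the coefficients must be zeroed individually, and the well-definedness of $r_k$ when $A_k\neq 0$) merely make explicit points the paper leaves implicit, so no further comparison is needed.
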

\begin{proof}
The aggregate error $e^{ch}$ caused by the fading channels can
be written as
\begin{align}
e^{ch} & =y_{\mathrm{des}}-\sqrt{\eta}\sum\nolimits_{k=1}^{K}A_{k}s_{k}\nonumber \\
&=\left(w^{i}\!-\!\frac{\mu}{K}\sum\nolimits_{k=1}^{K}g_{k}\right)\!-\!\sqrt{\eta}\sum\nolimits_{k=1}^{K}A_{k}\left(w^{i}\!-\!\mu_{k}g_{k}\right)\nonumber \\
 & =\left(1\!-\!\sqrt{\eta}\sum_{k=1}^{K}A_{k}\right)w^{i}\!+\!\sum_{k=1}^{K}\left(\sqrt{\eta}A_{k}\mu_{k}\!-\!\frac{\mu}{K}\right)g_{k},
\end{align}
which is mitigated if and only if both terms $\left(\frac{1}{K}\mu-\sqrt{\eta}A_{k}\mu_{k}\right)$
and $\left(1-\sqrt{\eta}\sum_{k=1}^{K}A_{k}\right)$ are $0$. Thus,
we directly obtain (\ref{eq:ther}).
\end{proof}
Based on Theorem \ref{thm:1}, the residual aggregate error is the
noise-related term $\sqrt{\eta}B$, which can be {measured by}
\begin{equation}
\mathrm{MSE}=\eta\mathbb{E}\left(\left\Vert B\right\Vert ^{2}\right).\label{eq:MSE}
\end{equation}
  For simplicity, we consider the retransmission mechanism {such} that
if there exists aggregate error, {the} probability of retransmission
{is}
\begin{equation}
P=1-\exp\left(-\frac{a\mathrm{\left\Vert \mathit{e}\right\Vert ^{2}}}{p_{\mathrm{des}}}\right),\label{eq:PER}
\end{equation}
where $a$ is the {modulation-related parameter} \cite{chen2020joint},
$p_{\mathrm{des}}$ {denotes} the power of the desired signal, {and} $e$ is the
aggregate error. Intuitively, a larger aggregate error {leads}
to a larger retransmission rate.

\section{Problem Formulation}\label{S:ProFor}

{The objective is to minimize the MSE metric given in (\ref{eq:MSE}), subject to equality constraint (\ref{eq:ther}) and boundary constraint $r_{k}=\mu_{k}/\mu\in\left[r_{\min},r_{\max}\right]$. In this section, we establish the problem formulations for both MISO and MIMO cases, which will be shown in the sequel are respectively convex and nonconvex. It is important to note} that SISO and SIMO can be regarded as the special cases of MISO and MIMO scenarios, respectively.

\subsection{MISO}

In the MISO scenario, the devices equipped with $N_{d}$ antennas
each transmit their models to the single-antenna aggregator. Under
this scenario, we have $A_{k}=\boldsymbol{h}_{k}^{\mathrm{T}}\boldsymbol{b}_{k}$ {and} $B=n$.
The aggregate error measured by MSE is then given by
\begin{equation}
\mathrm{MSE}=\eta\mathbb{E}\left(\left\Vert B\right\Vert ^{2}\right)=\eta\mathbb{E}\left(\left\Vert n\right\Vert ^{2}\right)=\eta\sigma^{2}.
\end{equation}
Since the noise power $\sigma^{2}$ is independent from the optimized
variables, the problem can be formulated as
\begin{subequations}\label{eq:SISO1}
\begin{align}
\min_{\eta,b_{k},r_{k}}\quad & \eta\nonumber \\
\mathrm{s.t.}\quad & \sqrt{\eta}\sum\nolimits_{k=1}^{K}\boldsymbol{h}_{k}\boldsymbol{b}_{k}=1, \label{eq:eqc1}\\
 & r_{k}=\frac{1}{K\sqrt{\eta}\boldsymbol{h}_{k}\boldsymbol{b}_{k}},  \quad\forall k,\label{eq:eqc}\\
 & r_{k}\in\left[r_{\min},r_{\max}\right], \quad\forall k,\\
 & \left\Vert \boldsymbol{b}_{k}\right\Vert ^{2}\leq P_{k},  \quad\forall k,\label{eq:powc}
\end{align}
\end{subequations}
where $P_{k}$ is the maximum transmit power at device $k$, {and} $\boldsymbol{h}_{k}\in\mathbb{C}^{N_{d}}$
is the channel vector between device $k$ and the aggregator. Both
equality constraints (\ref{eq:eqc1}) and (\ref{eq:eqc}) conspire
to guarantee the elimination of error $e^{\mathrm{ch}}$ caused by
the wireless fading channels as per \textbf{Theorem} \ref{thm:1}.

Motivated by the uniform-forcing transceiver design in \cite{8364613},
the optimal transmitting coefficient vector $\boldsymbol{b}_{k}$
is designed as
\begin{equation}
\boldsymbol{b}_{k}=\frac{\boldsymbol{h}_{k}^{\mathrm{H}}}{K\sqrt{\eta}\left\Vert \boldsymbol{h}_{k}\right\Vert ^{2}r_{k}}.
\end{equation}
Power constraint (\ref{eq:powc}) further suggests that $\frac{1}{K^{2}P_{k}r_{k}^{2}\left\Vert \boldsymbol{h}_{k}\right\Vert ^{2}}\leq\eta$, and thus {we have}
\begin{equation}
\eta=\max_{k}\frac{1}{K^{2}P_{k}r_{k}^{2}\left\Vert \boldsymbol{h}_{k}\right\Vert ^{2}}.
\end{equation}
{We learn} from (\ref{eq:eqc}) that $\boldsymbol{h}_{k}^{\mathrm{T}}\boldsymbol{b}_{k}=\frac{1}{K\sqrt{\eta}r_{k}}$.
By substituting it back to (\ref{eq:eqc1}), we have $\sum_{k=1}^{K}\frac{1}{Kr_{k}}=1$.
As a result, problem (\ref{eq:SISO1}) is equivalent to
\begin{subequations}\label{eq:SISO2}
\begin{align}
\underset{r_{k}}{\min}\quad & \max_{k}\frac{1}{K^{2}P_{k}r_{k}^{2}\left\Vert \boldsymbol{h}_{k}\right\Vert ^{2}}\\
\mathrm{s.t.}\quad & \sum\nolimits_{k=1}^{K}\frac{1}{Kr_{k}}=1, \\
 & r_{k}\in\left[r_{\min},r_{\max}\right],  \quad\forall k.\label{eq:MISO_cons1}
\end{align}
\end{subequations}
\begin{rem}
As a special case of the MISO scenario where $A_{k}=h_{k}b_{k}$
and $B=n$, the problem formulated under the SISO case is similar
to (\ref{eq:SISO2}). The only difference lies in the channel and
transmit coefficients, which are both complex scalars instead of vectors
in the MISO case.
\end{rem}

\subsection{MIMO}

In the MIMO scenario, each device and the aggregator are equipped
with $N_{d}$ and $N_{t}$ antennas, respectively. The terms $A_{k}$
and $B$ in (\ref{eq:aggregat_sig}) are $A_{k}=\boldsymbol{m}^{\mathrm{H}}\boldsymbol{H}_{k}\boldsymbol{b}_{k}$,
$B=\boldsymbol{m}^{\mathrm{H}}\boldsymbol{n}$ with $\boldsymbol{m}$
the receive beamforming vector. Accordingly, the aggregate error measured
by MSE is expressed as
\begin{equation}
\mathrm{MSE}\!=\!\eta\mathbb{E}\left(\!\left\Vert B\right\Vert ^{2}\!\right)\!=\!\eta\mathbb{E}\left(\!\left\Vert \boldsymbol{m}^{\mathrm{H}}\boldsymbol{n}\right\Vert ^{2}\!\right)\!=\!\sigma^{2}\!\left\Vert \boldsymbol{m}\right\Vert ^{2}\!\eta,
\end{equation}
where $\boldsymbol{n}$ the independent Gaussian noise vector. Based
on \textbf{Theorem} \ref{thm:1}, the MSE minimization problem can
be formulated as
\begin{subequations}\label{eq:SIMO}
\begin{align}
\min_{\boldsymbol{m},\eta,r_{k},\boldsymbol{b}_{k}}\quad & \left\Vert \boldsymbol{m}\right\Vert ^{2}\eta\label{eq:pro1_1-1}\\
\mathrm{s.t.\quad} & \sqrt{\eta}\sum\nolimits_{k=1}^{K}\boldsymbol{m}^{\mathrm{H}}\boldsymbol{H}_{k}\boldsymbol{b}_{k}=1,\label{eq:pro1-2-1}\\
 & r_{k}=\frac{1}{K\sqrt{\eta}\boldsymbol{m}^{\mathrm{H}}\boldsymbol{H}_{k}\boldsymbol{b}_{k}} ,  \quad\forall k, \label{eq:pro1_3-1}\\
 & r_{k}\in\left[r_{\min},r_{\max}\right],  \quad\forall k,\\
 & \left\Vert \boldsymbol{b}_{k}\right\Vert ^{2}\leq P_{k},  \quad\forall k,\label{eq:pro1_4-1}
\end{align}
\end{subequations}
{where} $\boldsymbol{b}_{k}\in\mathbb{C}^{N_{d}},\boldsymbol{H}_{k}\in\mathbb{C}^{N_{t}\times N_{d}},\boldsymbol{m\in\mathbb{C}}^{Nt}$
are the transmitting beamforming vector at device $k$, the channel
matrix between the aggregator and device $k$, and the receive beamforming
vector at the aggregator, respectively. According to the constraints (\ref{eq:pro1_3-1}), the optimal transmitting coefficient can be readily {obtained \cite{8364613}}, i.e.,
\begin{equation}
\boldsymbol{b}_{k}=\frac{\boldsymbol{H}_{k}^{\mathrm{H}}\boldsymbol{m}}{K\sqrt{\eta}r_{k}\left\Vert \boldsymbol{m}^{\mathrm{H}}\boldsymbol{H}_{k}\right\Vert ^{2}}.
\end{equation}
Power constraint (\ref{eq:pro1_4-1}) further indicates that $\eta\geq\frac{1}{K^{2}P_{k}r_{k}^{2}\left\Vert \boldsymbol{m}^{\mathrm{H}}\boldsymbol{H}_{k}\right\Vert ^{2}}$
for each device $k$, which implies
\begin{equation}
\eta=\underset{k}{\max}\frac{1}{K^{2}P_{k}r_{k}^{2}\left\Vert \boldsymbol{m}^{\mathrm{H}}\boldsymbol{H}_{k}\right\Vert ^{2}}.
\end{equation}
Similar to the MISO scenario, the ratio $r_{k}$ satisfies $\sum_{k=1}^{K}\frac{1}{Kr_{k}}=1$,
which can be easily derived from the equality constraints (\ref{eq:eqc1})
and (\ref{eq:eqc}). Problem (\ref{eq:SIMO}) can then be rewritten
as
\begin{subequations}\label{eq:SIMO1}
\begin{align}
\min_{\boldsymbol{m},r_{k}}\quad & \underset{k}{\max}\frac{\left\Vert \boldsymbol{m}\right\Vert ^{2}}{K^{2}P_{k}r_{k}^{2}\left\Vert \boldsymbol{m}^{\mathrm{H}}\boldsymbol{H}_{k}\right\Vert ^{2}}\label{eq:pro1_1-1-1}\\
\mathrm{s.t.\quad} & \sum_{k=1}^{K}\frac{1}{Kr_{k}}=1,\label{eq:pro1-2-1-1}\\
 & r_{k}\in\left[r_{\min},r_{\max}\right], \quad\forall k. \label{eq:pro1_3-1-1}
\end{align}
\end{subequations}
\begin{prop}
Problem (\ref{eq:SIMO1}) is equivalent to
\begin{subequations}\label{eq:SIMO2}
\begin{align}
\min_{\boldsymbol{m},r_{k}}\quad & \underset{k}{\max}\frac{\left\Vert \boldsymbol{m}\right\Vert ^{2}}{K^{2}P_{k}r_{k}^{2}\left\Vert \boldsymbol{m}^{\mathrm{H}}\boldsymbol{H}_{k}\right\Vert ^{2}}\label{eq:SIMO2-1}\\
\mathrm{\quad} & (\ref{eq:pro1-2-1-1}),\,(\ref{eq:pro1_3-1-1})\nonumber, \\
 & \left\Vert \boldsymbol{m}\right\Vert =1.\label{eq:SIMO2-2}
\end{align}
\end{subequations}
\end{prop}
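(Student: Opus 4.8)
The plan is to exploit the scale invariance of the objective in the receive beamformer $\boldsymbol{m}$. First I would observe that for any nonzero scalar $c\in\mathbb{C}$, replacing $\boldsymbol{m}$ by $c\boldsymbol{m}$ leaves the objective (\ref{eq:pro1_1-1-1}) unchanged: both $\left\Vert\boldsymbol{m}\right\Vert^{2}$ in the numerator and $\left\Vert\boldsymbol{m}^{\mathrm{H}}\boldsymbol{H}_{k}\right\Vert^{2}$ in the denominator are homogeneous of degree two in $\boldsymbol{m}$, so each term inside the $\max_{k}$ — and hence the maximum itself — is homogeneous of degree zero. Moreover, the remaining constraints (\ref{eq:pro1-2-1-1}) and (\ref{eq:pro1_3-1-1}) do not involve $\boldsymbol{m}$, so rescaling $\boldsymbol{m}$ never affects feasibility.

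Next I would dispose of the degenerate configurations. The variable $\boldsymbol{m}$ in problem (\ref{eq:SIMO1}) inherits the equality constraint (\ref{eq:pro1-2-1}), which reads $\sqrt{\eta}\sum_{k}\boldsymbol{m}^{\mathrm{H}}\boldsymbol{H}_{k}\boldsymbol{b}_{k}=1\neq0$; hence any feasible $\boldsymbol{m}$ is nonzero and the normalization $\boldsymbol{m}/\left\Vert\boldsymbol{m}\right\Vert$ is well defined. In addition, if $\boldsymbol{m}^{\mathrm{H}}\boldsymbol{H}_{k}=\boldsymbol{0}$ for some $k$ the corresponding ratio blows up, so such points are never optimal and can be excluded; consequently the objective is finite and well defined everywhere along the correspondence constructed below.

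The equivalence then follows by a routine two-way argument. Given any feasible $(\boldsymbol{m},\{r_{k}\})$ for (\ref{eq:SIMO1}), the point $(\boldsymbol{m}/\left\Vert\boldsymbol{m}\right\Vert,\{r_{k}\})$ is feasible for (\ref{eq:SIMO2}) — it additionally satisfies $\left\Vert\boldsymbol{m}\right\Vert=1$ — and, by degree-zero homogeneity, attains exactly the same value of (\ref{eq:SIMO2-1}). Conversely, every feasible point of (\ref{eq:SIMO2}) is feasible for (\ref{eq:SIMO1}) with the same objective, since (\ref{eq:SIMO2}) merely adds the normalization constraint (\ref{eq:SIMO2-2}). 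Therefore the two problems have the same optimal value, and an optimal solution of either yields an optimal solution of the other (up to scaling of $\boldsymbol{m}$), which is precisely the claimed equivalence.

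I do not expect a genuine obstacle in this proof; the only point requiring care is the bookkeeping of the two degenerate cases $\boldsymbol{m}=\boldsymbol{0}$ and $\boldsymbol{m}^{\mathrm{H}}\boldsymbol{H}_{k}=\boldsymbol{0}$, which must be ruled out (the former by the aggregation equality constraint, the latter because it makes the objective infinite) so that the rescaling map is always legitimate and objective-preserving.
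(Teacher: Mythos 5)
Your proof is correct and follows essentially the same route as the paper: both arguments rest on the degree-zero homogeneity of the objective in $\boldsymbol{m}$ and the fact that the remaining constraints do not involve $\boldsymbol{m}$, so one may normalize $\tilde{\boldsymbol{m}}=\boldsymbol{m}/\left\Vert\boldsymbol{m}\right\Vert$ without changing the objective value or feasibility. Your extra bookkeeping about the degenerate cases $\boldsymbol{m}=\boldsymbol{0}$ and $\boldsymbol{m}^{\mathrm{H}}\boldsymbol{H}_{k}=\boldsymbol{0}$ is a slight refinement the paper omits, but it does not change the substance of the argument.
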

\begin{proof}
$\forall\boldsymbol{m}$ can be written as the multiplication of its
norm and the unit direction vector, i.e., $\boldsymbol{m}=\left\Vert \boldsymbol{m}\right\Vert \frac{\boldsymbol{m}}{\left\Vert \boldsymbol{m}\right\Vert }.$
If we let $\tilde{\boldsymbol{m}}=\frac{\boldsymbol{m}}{\parallel\boldsymbol{m}\parallel}$,
the objective function of problem (\ref{eq:SIMO1}) is equivalent
to $\underset{k}{\max}\frac{\left\Vert \tilde{\boldsymbol{m}}\right\Vert ^{2}}{K^{2}P_{k}r_{k}^{2}\left\Vert \tilde{\boldsymbol{m}}^{\mathrm{H}}\boldsymbol{H}_{k}\right\Vert ^{2}},$
where $\left\Vert \tilde{\boldsymbol{m}}\right\Vert =1$. This completes
the proof.
\end{proof}

\begin{rem}
The SIMO scenario is a special case of the MIMO case, where $A_{k}=\boldsymbol{m}^{\mathrm{H}}\boldsymbol{h}_{k}b_{k}$,
$B=\boldsymbol{m}^{\mathrm{H}}\boldsymbol{n}$. The problem formulated
under SIMO case is the same as (\ref{eq:SIMO2}) except that the channel
and transmit coefficients are respectively vector $\boldsymbol{h}_{k}\in\mathbb{C}^{N_{t}}$
and scalar $b_{k}\in\mathbb{C}$.
\end{rem}

\section{DLR Optimization}\label{S:DLRsov}

In this section, we develop two algorithms to solve problems (\ref{eq:SISO2})
and (\ref{eq:SIMO2}), respectively. For the MISO scenario, problem
(\ref{eq:SISO2}) is convex and a closed-form solution is derived. For nonconvex problem (\ref{eq:SIMO2}), { we decompose the problem into two sub-problems and propose an iterative method by alternately fixing one variable and solving for the other.}

\subsection{MISO}

Obviously, problem (\ref{eq:SISO2}) is convex. For notation simplicity,
we {let} $l_{k}=\frac{1}{r_{k}}$ and $c_{k}=\frac{1}{K\sqrt{P_{k}}\left\Vert \boldsymbol{h}_{k}\right\Vert }${.
As such,} problem (\ref{eq:SISO2}) can be written as \begin{subequations}
\begin{align}
\underset{l_{k}}{\min}\quad & \max_{k}\left(c_{k}l_{k}\right)^{2}\\
\mathrm{s.t.}\quad & \sum\nolimits_{k=1}^{K}l_{k}=K,\label{eq:SIMO_C1}\\
 & \frac{1}{r_{\max}}\leq l_{k}\leq\frac{1}{r_{\min}},\quad\forall k.\label{eq:SIMO_C2}
\end{align}
\label{eq:SISO3}
\end{subequations}The solution that minimizes $\max_{k}\left(c_{k}l_{k}\right)^{2}$
also minimizes $\max_{k}c_{k}l_{k}$, which indicates that their solutions
are identical. Consequently, problem (\ref{eq:SISO3}) is further
equivalent to the following problem
\begin{equation}
\underset{l_{k}}{\min}\,\max_{k}c_{k}l_{k}\quad\mathrm{s.t.}\:(\ref{eq:SIMO_C1}),\,(\ref{eq:SIMO_C2}),
\end{equation}
which is a {typical} linear programming problem.

Assuming $k=\mathrm{arg}\max_{i}c_{i}l_{i}$, we then have $c_{i}l_{i}\leq c_{k}l_{k}$
for all $i$. Following the equation constraint (\ref{eq:SIMO_C1}),
we have
\begin{equation}
K=\sum\nolimits_{i=1}^{K}l_{i}\leq\sum\nolimits_{i=1}^{K}\frac{c_{k}l_{k}}{c_{i}}=c_{k}l_{k}\sum\nolimits_{i=1}^{K}\frac{1}{c_{i}}.\label{eq:MISO_closed_form1}
\end{equation}
Thus
\begin{equation}
c_{k}l_{k}\geq\frac{K}{\sum_{i=1}^{K}\frac{1}{c_{i}}}.\label{eq:MISO_closed_form2}
\end{equation}

\begin{thm}
\label{thm:MISO}In a MISO system, the MSE is lower bounded by $\mathrm{MSE^{lb}}=\frac{\sigma^{2}}{\left(\sum_{i=1}^{K}\sqrt{P_{i}}\left\Vert \boldsymbol{h}_{i}\right\Vert \right)^{2}}$.
Denoting the MSE obtained with and without considering DLR as $\mathrm{MSE^{d}}$
and $\mathrm{MSE^{n}}$, we always have
\begin{equation}
\mathrm{MSE^{n}}\overset{\left(a\right)}{\geq}\mathrm{MSE^{d}}\overset{\left(b\right)}{\geq}\mathrm{MSE^{lb}},\label{eq:MISO_ther}
\end{equation}
where equality $\left(a\right)$ holds if and only if $\sqrt{P_{i}}\left\Vert \boldsymbol{h}_{i}\right\Vert =\sqrt{P_{j}}\left\Vert \boldsymbol{h}_{j}\right\Vert ,\forall i,j\in\mathcal{K}$,
and equality $\left(b\right)$ holds if and only if $r_{i}\sqrt{P_{i}}\left\Vert \boldsymbol{h}_{i}\right\Vert =r_{j}\sqrt{P_{j}}\left\Vert \boldsymbol{h}_{j}\right\Vert ,\forall i,j\in\mathcal{K}$.
\end{thm}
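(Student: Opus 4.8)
The plan is to read the MSE directly off the reformulated program~(\ref{eq:SISO3}) and then exploit the chain of inequalities~(\ref{eq:MISO_closed_form1})--(\ref{eq:MISO_closed_form2}), which is already in hand. First I would record that, after eliminating $\eta$ and $\boldsymbol{b}_k$, the MSE equals $\sigma^2\eta=\sigma^2\max_k(c_kl_k)^2=\sigma^2(\max_k c_kl_k)^2$ with $l_k=1/r_k$ and $c_k=1/(K\sqrt{P_k}\|\boldsymbol{h}_k\|)$; hence minimizing the MSE over the DLR ratios is exactly the linear program $\min\max_k c_kl_k$ subject to $\sum_k l_k=K$ and $l_k\in[1/r_{\max},1/r_{\min}]$, and $\mathrm{MSE^d}$ is its optimal value.

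For inequality $(b)$ and the lower bound: (\ref{eq:MISO_closed_form2}) gives, for \emph{any} feasible $\{r_k\}$ (the box is not even needed, only $\sum_k l_k=K$), the estimate $\max_k c_kl_k\ge K/\sum_i(1/c_i)$; substituting $1/c_i=K\sqrt{P_i}\|\boldsymbol{h}_i\|$ turns this into $\max_k c_kl_k\ge 1/\sum_i\sqrt{P_i}\|\boldsymbol{h}_i\|$, and squaring and multiplying by $\sigma^2$ yields $\mathrm{MSE}\ge\mathrm{MSE^{lb}}$; applying this at the optimal DLR proves $(b)$. The equality analysis for $(b)$ comes straight from the chain~(\ref{eq:MISO_closed_form1}): the step $\sum_i l_i\le\sum_i c_kl_k/c_i$ is tight if and only if $c_il_i$ is the same for every $i$, i.e. if and only if $r_i\sqrt{P_i}\|\boldsymbol{h}_i\|$ is independent of $i$. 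I would also note that this equalizing choice, $1/r_i\propto\sqrt{P_i}\|\boldsymbol{h}_i\|$, automatically satisfies $\sum_i 1/(Kr_i)=1$, so whenever it additionally lies in $[r_{\min},r_{\max}]$ the bound $\mathrm{MSE^{lb}}$ is actually attained.

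For inequality $(a)$: the no-DLR scheme is precisely the feasible point $r_k=1$ (equivalently $l_k=1$) of the same program — it satisfies $\sum_k 1/(Kr_k)=1$ and lies in the box as long as $r_{\min}\le 1\le r_{\max}$ — so its objective value $\mathrm{MSE^n}=\sigma^2\max_k c_k^2$ cannot beat the optimum $\mathrm{MSE^d}$, giving $(a)$. For the equality condition I would compare $\mathrm{MSE^n}=\sigma^2(\max_k c_k)^2$ with $\mathrm{MSE^{lb}}=\sigma^2\bigl(K/\sum_i(1/c_i)\bigr)^2$: since $1/c_i\ge 1/\max_k c_k$ with equality iff $c_i=\max_k c_k$, one gets $K/\sum_i(1/c_i)\le\max_k c_k$ with equality iff all $c_k$ are equal. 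Thus if the $\sqrt{P_k}\|\boldsymbol{h}_k\|$ are all equal, $l_k=1$ already attains $\mathrm{MSE^{lb}}$, forcing $\mathrm{MSE^n}=\mathrm{MSE^d}=\mathrm{MSE^{lb}}$. Conversely, if they are not all equal I would exhibit a strictly better feasible point: decreasing $l_k$ by a small $\epsilon>0$ on the set of indices attaining $\max_k c_k$ and redistributing $\epsilon$ among the remaining indices keeps $\sum_k l_k=K$, stays in the box (this is where $r_{\min}<1<r_{\max}$ is used), and strictly lowers $\max_k c_kl_k$, so $\mathrm{MSE^d}<\mathrm{MSE^n}$.

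The routine parts are the substitution $1/c_i=K\sqrt{P_i}\|\boldsymbol{h}_i\|$ and the squaring; the only place needing care is the ``only if'' direction of equality $(a)$ — the perturbation argument showing that non-equal $\sqrt{P_k}\|\boldsymbol{h}_k\|$ make the no-DLR point strictly suboptimal — which is also where the mild and natural assumption $r_{\min}<1<r_{\max}$ enters. I expect that to be the main obstacle, together with being explicit that the equalizing solution underlying equality $(b)$ is feasible.
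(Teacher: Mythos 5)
Your proposal is correct and follows essentially the same route as the paper: the lower bound and the equality condition for $(b)$ come from the chain (\ref{eq:MISO_closed_form1})--(\ref{eq:MISO_closed_form2}), and $(a)$ follows by viewing the fixed-learning-rate scheme as the feasible point $l_k=1$ and exhibiting a strictly better feasible point when the $\sqrt{P_k}\left\Vert \boldsymbol{h}_k\right\Vert$ are not all equal. Your only additions are to make explicit the $\epsilon$-perturbation that the paper merely asserts (``we can always find a feasible set of coefficients'') and to flag the implicit requirement $r_{\min}<1<r_{\max}$, which slightly sharpens but does not change the argument.
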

\begin{proof}
According to (\ref{eq:MISO_closed_form2}), the lower bound of $c_{k}l_{k}$
is $\frac{K}{\sum_{i=1}^{K}\frac{1}{c_{i}}}$. Hence, the MSE is lower
bounded by
\begin{equation}
\mathrm{MSE^{lb}}=\left(\frac{K}{\sum_{i=1}^{K}\frac{1}{c_{i}}}\right)^{2}\sigma^{2}=\frac{\sigma^{2}}{\left(\sum_{i=1}^{K}\sqrt{P_{i}}\left\Vert \boldsymbol{h}_{i}\right\Vert \right)^{2}}.
\end{equation}
The lower bound is {attained if and only if} $c_{i}l_{i}=c_{j}l_{j},\forall i,j\in\mathcal{K}$, {which also means that} $r_{i}\sqrt{P_{i}}\left\Vert \boldsymbol{h}_{i}\right\Vert =r_{j}\sqrt{P_{j}}\left\Vert \boldsymbol{h}_{j}\right\Vert ,\forall i,j\in\mathcal{K}$.

With loss of generality, {we} assume that $c_{k}$ is sorted in a descending
order, i.e., $c_{i}\geq c_{j},\forall i>j$. Thus, the MSE without
considering DLR can be readily obtained as
\begin{equation}
\mathrm{MSE^{n}}=c_{1}^{2}\sigma^{2}=\frac{\sigma^{2}}{K^{2}P_{1}\left\Vert \boldsymbol{h}_{1}\right\Vert ^{2}},
\end{equation}
where $l_{k}$ can be regarded to have equal value of $1$. When taking
the DLR into consideration, we can always find a feasible set of coefficients
$\left[l_{1},l_{2},\ldots,l_{K}\right]$, which guarantee both $c_{1}\geq c_{1}l_{1}$
and $c_{1}l_{1}=\max\left(c_{i}l_{i},\forall i\in\mathcal{K}\right).$
Consequently, we have
\begin{equation}
\mathrm{MSE^{d}}=c_{1}^{2}l_{1}^{2}\sigma^{2}\leq c_{1}^{2}\sigma^{2}=\mathrm{MSE^{n}},
\end{equation}
the equality of which holds if and only if when $c_{i}=c_{j},\forall i,j\in\mathcal{K}$,
i.e., $\sqrt{P_{i}}\left\Vert \boldsymbol{h}_{i}\right\Vert =\sqrt{P_{j}}\left\Vert \boldsymbol{h}_{j}\right\Vert ,\forall i,j\in\mathcal{K}$.
Finally, we complete the proof.
\end{proof}
To further minimize the MSE, we should optimize $l_{i}$. Based on
the above analysis, the optimal solution of $l_{i}$ under constraint
(\ref{eq:SIMO_C2}) is given by
\begin{equation}
l_{i}=\mathrm{clip}\left(\frac{c_{k}l_{k}}{c_{i}},\left[\frac{1}{r_{\max}},\frac{1}{r_{\min}}\right]\right),\label{eq:closed-form_l}
\end{equation}
where
\begin{equation}
\sum_{i=1}^{K}l_{i}=K.
\end{equation}
{Operation $\mathrm{clip}\left(x,\left[a,b\right]\right)$ truncates $x$ to the specified interval $\left[a,b\right]$.}

The overall procedure to solve problem (24) is shown in \textbf{Algorithm}
\ref{AlgorithmMISO}. According to (\ref{eq:closed-form_l}), the
proposed scheme needs to know the user index $k$ with the maximum
value $c_{k}l_{k}$. To find the device index, we exhaustively search
all devices, which indicates that the number of iterations in the
outer layer is $K$. For a given device index $k$, {the bisection technique is applied to find a solution $l_k$, the complexity of which is $\mathcal{O}(\log_{2}(1/\delta))$
with accuracy $\delta$.}

\begin{algorithm}[t] 	
\caption{\label{AlgorithmMISO} Optimal Learning Rate for MISO.}
\textbf{Input}: $c_{k}$, ${\mathrm{obj}}=\frac{\max_k(c_k)}{r_{\min}}$, $\delta$, $\mathrm{Num}=20$\\
\textbf{Output}: $l_i^{\mathrm{opt}}, {\mathrm{obj}}$\\
\textbf{Initialize}: $l_k^\mathrm{min}=\frac{1}{r_{\max}}$, $l_k^\mathrm{max}=\frac{1}{r_{\min}}$	
\begin{algorithmic}[1]
\State \textbf{for} $k=1:K$
\State \quad \textbf{while} $|\sum_{i=1}^{K} l_{i}- K|\leq \delta$
\State \quad \quad $l_k=(l_k^\mathrm{max}+l_k^\mathrm{min})/2$
\State \quad \quad $l_i = \mathrm{clip}\left(\frac{c_{k}l_{k}}{c_{i}},[\frac{1}{r_{\max}},\frac{1}{r_{\min}}]\right),\forall i\in\mathcal{K}$
\State \quad \quad  \textbf{if} ($\sum_{i=1}^{K} l_{i}\geq K$)
\State \quad \quad \quad $l_k^{\mathrm {max}}=l_k$
\State \quad \quad  \textbf{else}
\State \quad \quad \quad $l_k^{\mathrm {min}}=l_k$
\State \quad \textbf{if} ${\mathrm{obj}} \geq \max(c_il_i)$
\State \quad \quad ${\mathrm{obj}}=\max(c_il_i)$
\State \quad \quad $l_i^{\mathrm{opt}}=l_i, \forall i\in\mathcal{K}$
\end{algorithmic}
\end{algorithm}

\subsection{MIMO}

{Suppose that $c_{k}=\frac{\left\Vert \boldsymbol{m}\right\Vert }{K\sqrt{P_{k}}\left\Vert \boldsymbol{m}^{\mathrm{H}}\boldsymbol{H}_{k}\right\Vert }$,
$l_{k}=\frac{1}{r_{k}}$, and} $c_{k}l_{k}=\max_{i}\left(c_{i}l_{i}\right)$,
we have the following theorem.
\begin{thm}
{\label{thm:MIMO}}In a MIMO system, the MSE is lower
bounded by $\mathrm{MSE^{lbm}}=\frac{\sigma^{2}}{\left(\sum_{i=1}^{K}\sqrt{P_{i}}\left\Vert \boldsymbol{m}^{\mathrm{H}}\boldsymbol{H}_{i}\right\Vert \right)^{2}}$
for any given $\boldsymbol{m}$, and we always have
\begin{equation}
\mathrm{MSE^{n}}\overset{\left(a\right)}{\geq}\mathrm{MSE^{d}}\overset{\left(b\right)}{\geq}\mathrm{MSE^{lbm}},\label{eq:MIMO_ther}
\end{equation}
where {the equalities} of $\left(a\right)$ and $\left(b\right)$
hold when $\sqrt{P_{i}}\left\Vert \boldsymbol{m}^{\mathrm{H}}\boldsymbol{H}_{i}\right\Vert =\sqrt{P_{j}}\left\Vert \boldsymbol{m}^{\mathrm{H}}\boldsymbol{H}_{j}\right\Vert ,\forall i,j\in\mathcal{K}$, and
$r_{i}\sqrt{P_{i}}\left\Vert \boldsymbol{m}^{\mathrm{H}}\boldsymbol{H}_{i}\right\Vert =r_{j}\sqrt{P_{j}}\left\Vert \boldsymbol{m}^{\mathrm{H}}\boldsymbol{H}_{j}\right\Vert ,\forall i,j\in\mathcal{K}$,
 respectively.
\end{thm}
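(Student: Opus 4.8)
The plan is to reduce Theorem~\ref{thm:MIMO} to the MISO argument already carried out in the proof of Theorem~\ref{thm:MISO}, treating the receive beamforming vector $\boldsymbol{m}$ as fixed. For a fixed $\boldsymbol{m}$ normalized so that $\|\boldsymbol{m}\|=1$ (legitimate by the reformulation~(\ref{eq:SIMO2})), substitute the optimal transmit vectors $\boldsymbol{b}_k$ and $\eta=\max_k\frac{1}{K^{2}P_k r_k^{2}\|\boldsymbol{m}^{\mathrm H}\boldsymbol{H}_k\|^{2}}$ into the MSE. This gives $\mathrm{MSE}=\sigma^{2}\|\boldsymbol{m}\|^{2}\eta=\sigma^{2}\max_k(c_kl_k)^{2}$ with $c_k=\frac{\|\boldsymbol{m}\|}{K\sqrt{P_k}\|\boldsymbol{m}^{\mathrm H}\boldsymbol{H}_k\|}$ and $l_k=1/r_k$, while the equality constraint~(\ref{eq:pro1-2-1-1}) reads $\sum_k l_k=K$. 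Structurally this is identical to problem~(\ref{eq:SISO3}) with $\|\boldsymbol{h}_k\|$ replaced by $\|\boldsymbol{m}^{\mathrm H}\boldsymbol{H}_k\|$, so the whole MISO chain of reasoning transplants verbatim.

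For the lower bound $(b)$, I would let $k=\mathrm{arg}\max_i c_il_i$ and repeat the averaging argument of (\ref{eq:MISO_closed_form1})–(\ref{eq:MISO_closed_form2}): $K=\sum_i l_i\le c_kl_k\sum_i 1/c_i$, hence $c_kl_k\ge K/\sum_i 1/c_i$. Substituting $1/c_i=K\sqrt{P_i}\|\boldsymbol{m}^{\mathrm H}\boldsymbol{H}_i\|/\|\boldsymbol{m}\|$ and using $\|\boldsymbol{m}\|=1$ yields $\mathrm{MSE}\ge\sigma^{2}\bigl(K/\sum_i 1/c_i\bigr)^{2}=\frac{\sigma^{2}}{(\sum_i\sqrt{P_i}\|\boldsymbol{m}^{\mathrm H}\boldsymbol{H}_i\|)^{2}}=\mathrm{MSE^{lbm}}$. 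Equality in the averaging step forces $c_il_i=c_jl_j$ for all $i,j$, i.e. $r_i\sqrt{P_i}\|\boldsymbol{m}^{\mathrm H}\boldsymbol{H}_i\|=r_j\sqrt{P_j}\|\boldsymbol{m}^{\mathrm H}\boldsymbol{H}_j\|$, which is the stated condition for $(b)$.

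For inequality $(a)$, without DLR all $r_k=1$, so $l_k=1$ and $\mathrm{MSE^{n}}=\sigma^{2}\max_k c_k^{2}$; with DLR, exactly as in the proof of Theorem~\ref{thm:MISO}, one can always choose a feasible tuple $[l_1,\dots,l_K]$ with $\sum_k l_k=K$ for which $\max_k c_kl_k\le\max_k c_k$, so $\mathrm{MSE^{d}}=\sigma^{2}(\max_k c_kl_k)^{2}\le\sigma^{2}\max_k c_k^{2}=\mathrm{MSE^{n}}$, with equality precisely when all $c_k$ coincide, i.e. $\sqrt{P_i}\|\boldsymbol{m}^{\mathrm H}\boldsymbol{H}_i\|=\sqrt{P_j}\|\boldsymbol{m}^{\mathrm H}\boldsymbol{H}_j\|$, the condition for $(a)$.

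The only genuinely new point relative to the MISO proof — and the step I expect to be the main obstacle — is that every quantity here ($c_k$, $\mathrm{MSE^{lbm}}$, and both equality conditions) is conditioned on the fixed $\boldsymbol{m}$, so the statement is really a $\boldsymbol{m}$-indexed family of bounds rather than a single number; one must be careful that this is stated as "for any given $\boldsymbol{m}$", which is exactly the form later consumed by the beamforming optimization. A secondary caveat inherited from Theorem~\ref{thm:MISO} is that attaining $\mathrm{MSE^{lbm}}$ additionally requires the equalizing ratios $r_k$ to satisfy the box constraint $r_k\in[r_{\min},r_{\max}]$; when a boundary is active the bound is not achieved, but the inequality chain~(\ref{eq:MIMO_ther}) still holds because it only uses feasibility of the clipped $l_k$.
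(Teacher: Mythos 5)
Your proposal is correct and follows essentially the same route as the paper: for a fixed, normalized $\boldsymbol{m}$ the problem reduces to the MISO problem with equivalent channels $\boldsymbol{m}^{\mathrm H}\boldsymbol{H}_k$, the bound $(b)$ comes from the same averaging argument $c_k l_k \geq K/\sum_i (1/c_i)$ with equality iff all $c_i l_i$ coincide, and $(a)$ comes from exhibiting a feasible DLR tuple that does not increase $\max_k c_k l_k$. The only (cosmetic) difference is that the paper proves $(a)$ via a two-case split at the no-DLR-optimal beamformer $\boldsymbol{m}^{*}$, whereas you transplant the MISO sorting argument verbatim for an arbitrary fixed $\boldsymbol{m}$, which is if anything more faithful to the ``for any given $\boldsymbol{m}$'' phrasing; your caveat about the box constraint $r_k\in[r_{\min},r_{\max}]$ is likewise a reasonable observation that the paper leaves implicit.
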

\begin{proof}
According to the equality constraints (\ref{eq:pro1-2-1-1}) and (\ref{eq:SIMO2-2}),
we arrive at
\begin{equation}
c_{k}l_{k}\!\geq\!\frac{K}{\sum\limits_{i=1}^{K}\displaystyle\frac{1}{c_{i}}}\!=\!\frac{\left\Vert \boldsymbol{m}\right\Vert }{\sum\limits_{i=1}^{K}\!\sqrt{P_{i}}\left\Vert \boldsymbol{m}^{\mathrm{H}}\boldsymbol{H}_{i}\right\Vert }\!=\!\frac{1}{\sum\limits_{i=1}^{K}\!\sqrt{P_{i}}\left\Vert \boldsymbol{m}^{\mathrm{H}}\boldsymbol{H}_{i}\right\Vert }.
\end{equation}
Thus, $\mathrm{MSE^{lbm}}=\frac{\sigma^{2}}{\left(\sum_{i=1}^{K}\sqrt{P_{i}}\left\Vert \boldsymbol{m}^{\mathrm{H}}\boldsymbol{H}_{i}\right\Vert \right)^{2}}$
is the lower bound of MSE for any given feasible $\boldsymbol{m}$,
which is achieved only {if} $c_{i}l_{i}=c_{j}l_{j},\forall i,j\in\mathcal{K}$,
i.e., $r_{i}\sqrt{P_{i}}\left\Vert \boldsymbol{m}^{\mathrm{H}}\boldsymbol{H}_{i}\right\Vert =r_{j}\sqrt{P_{j}}\left\Vert \boldsymbol{m}^{\mathrm{H}}\boldsymbol{H}_{j}\right\Vert ,\forall i,j\in\mathcal{K}$.

{We first put aside the DLR and denote the equivalent channel as} $\boldsymbol{h}'_{i}=\boldsymbol{m}^{\mathrm{H}}\sqrt{P_{i}}\boldsymbol{H}_{i}\in\mathbb{C}^{N_{d}}$. {As such,}
the problem of MSE minimization {becomes to find} $\boldsymbol{m}$
that maximizes the minimum $\ell_{2}$-norm of $\boldsymbol{h}'_{i}$.
Denoting its optimal solution as $\boldsymbol{m}^{*}$, the corresponding
minimum $\ell_{2}$-norm and MSE can be expressed as $h_{\mathrm{min}}^{\mathrm{norm}}=\min\left(\sqrt{P_{i}}\left\Vert \boldsymbol{\left(m^{*}\right)}^{\mathrm{H}}\boldsymbol{H}_{i}\right\Vert \right)$,
and $\mathrm{MSE}^{\mathrm{n}}=\frac{\sigma^{2}}{\left(Kh_{\mathrm{min}}^{\mathrm{norm}}\right)^{2}}$.
Then, {we} consider the DLR in the following two cases.

Case 1: {The} lower bound is not {attained}, i.e., $\mathrm{MSE}^{\mathrm{n}}>\mathrm{MSE}^{\mathrm{lbm}}$.
Without loss of generality, assume that $c_{k}$ is sorted in an ascending
order, i.e., $\left\Vert \boldsymbol{h}_{i}'\right\Vert \geq\left\Vert \boldsymbol{h}_{j}'\right\Vert ,\forall i>j$.
There always exists a feasible set of DLR coefficients $\left[r_{1},r_{2},\ldots,r_{K}\right]$
such that $r_{K}\left\Vert \boldsymbol{h}_{K}'\right\Vert >\left\Vert \boldsymbol{h}_{K}'\right\Vert $
with $r_{K}\left\Vert \boldsymbol{h}_{K}'\right\Vert =\min\left(r_{i}\left\Vert \boldsymbol{h}_{i}'\right\Vert \right)$.
Thus, $\mathrm{MSE}^{\mathrm{n}}>\mathrm{MSE}^{\mathrm{d}}\overset{\left(b\right)}{\geq}\mathrm{MSE^{lbm}}$,
and the equality of $\left(b\right)$ holds only when $r_{i}\left\Vert \boldsymbol{h}_{i}'\right\Vert =r_{j}\left\Vert \boldsymbol{h}_{j}'\right\Vert ,\forall i,j\in\mathcal{K}$.

Case 2: {The} lower bound is achieved i.e., $\mathrm{MSE}^{\mathrm{n}}=\mathrm{MSE}^{\mathrm{lbm}}$.
In this case, we have $\left\Vert \boldsymbol{h}_{i}\right\Vert =\left\Vert \boldsymbol{h}_{j}\right\Vert ,\forall i,j\in\mathcal{K}$
and $\mathrm{MSE}^{\mathrm{n}}=\mathrm{MSE}^{\mathrm{d}}=\mathrm{MSE}^{\mathrm{lbm}}$.
The introduction of DLR {cannot} further improve the performance, and DLR {in this case is equal to} $1$.

Therefore, we complete the proof.
\end{proof}

In the MIMO scenario, problem (\ref{eq:SIMO2}) is difficult due to
the nonconvex objective (\ref{eq:SIMO2-1}) and constraint (\ref{eq:SIMO2-2}).
By introducing an auxiliary variable $\tau$, problem (\ref{eq:SIMO2})
is further equivalent to the following problem:
\begin{subequations}\label{eq:SIMO3}
\begin{align}
\min_{\boldsymbol{m},r_{k},\tau}\quad & \tau\label{eq:SIMO3-1}\\
\mathrm{s.t.\quad} & \left\Vert \boldsymbol{m}\right\Vert ^{2}\leq\tau K^{2}P_{k}\left\Vert \boldsymbol{m}^{\mathrm{H}}\boldsymbol{H}_{k}\right\Vert ^{2}r_{k}^{2},\quad\forall k,\label{eq:SIMO3-2}\\
 & (\ref{eq:pro1-2-1-1}),(\ref{eq:pro1_3-1-1}),(\ref{eq:SIMO2-2}).\nonumber
\end{align}
\end{subequations}
To solve problem (\ref{eq:SIMO3}), we utilize the iterative technique
and decompose the problem into two sub-problems by {alternately} fixing the DLR
ratio $r_{k}$ and the receive beamforming vector $\boldsymbol{m}$, respectively.

Given the DLR ratio $r_{k}$, the original problem (\ref{eq:SIMO3})
is reduced into the following sub-problem:
\begin{equation}
\min_{\boldsymbol{m},\tau}\;\tau\qquad\mathrm{s.t.}\quad(\ref{eq:SIMO2-2}),(\ref{eq:SIMO3-2}),\label{eq:SIMO3-A}
\end{equation}
which is nonconvex due to constraints (\ref{eq:SIMO2-2}) and (\ref{eq:SIMO3-2}).
To solve problem (\ref{eq:SIMO3-A}), we have the following lemma.
\begin{lem}
\label{thm:thm2}{Suppose that we have a} semidefinite matrix $\boldsymbol{A}_{k}=\boldsymbol{H}_{k}\boldsymbol{H}_{k}^{\mathrm{H}}$,
if $\det\boldsymbol{A}_{k}>0$, the range of $\frac{\left\Vert \boldsymbol{m}\right\Vert ^{2}}{K^{2}P_{k}\left\Vert \boldsymbol{m}^{\mathrm{H}}\boldsymbol{H}_{k}\right\Vert ^{2}r_{k}^{2}}$
is $\left[\frac{1}{K^{2}P_{k}r_{k}^{2}\lambda_{\mathrm{max}}},\frac{1}{K^{2}P_{k}r_{k}^{2}\lambda_{k,\mathrm{min}}}\right]$;
otherwise $\left[\frac{1}{K^{2}P_{k}r_{k}^{2}\lambda_{k,\mathrm{max}}},\infty\right]$,
where $\lambda_{k,\mathrm{max}}$ and $\lambda_{k,\mathrm{min}}$
are the maximum and minimum eigenvalues of $\boldsymbol{A}_{k}${, respectively}.
\end{lem}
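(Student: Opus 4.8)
The plan is to recognize the quantity in question as a fixed scalar multiple of the reciprocal of a Rayleigh quotient and then invoke the Rayleigh--Ritz theorem. First I would rewrite the denominator as $\left\Vert \boldsymbol{m}^{\mathrm{H}}\boldsymbol{H}_{k}\right\Vert ^{2}=\boldsymbol{m}^{\mathrm{H}}\boldsymbol{H}_{k}\boldsymbol{H}_{k}^{\mathrm{H}}\boldsymbol{m}=\boldsymbol{m}^{\mathrm{H}}\boldsymbol{A}_{k}\boldsymbol{m}$, so that
\begin{equation}
\frac{\left\Vert \boldsymbol{m}\right\Vert ^{2}}{K^{2}P_{k}\left\Vert \boldsymbol{m}^{\mathrm{H}}\boldsymbol{H}_{k}\right\Vert ^{2}r_{k}^{2}}=\frac{1}{K^{2}P_{k}r_{k}^{2}}\cdot\frac{\boldsymbol{m}^{\mathrm{H}}\boldsymbol{m}}{\boldsymbol{m}^{\mathrm{H}}\boldsymbol{A}_{k}\boldsymbol{m}},
\end{equation}
where $\boldsymbol{A}_{k}=\boldsymbol{H}_{k}\boldsymbol{H}_{k}^{\mathrm{H}}$ is Hermitian positive semidefinite. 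Since $K^{2}P_{k}r_{k}^{2}$ is a fixed positive constant, it suffices to characterize the range of $g(\boldsymbol{m})\triangleq\boldsymbol{m}^{\mathrm{H}}\boldsymbol{m}/(\boldsymbol{m}^{\mathrm{H}}\boldsymbol{A}_{k}\boldsymbol{m})$ over all nonzero $\boldsymbol{m}$, with the convention that $g(\boldsymbol{m})=+\infty$ whenever $\boldsymbol{m}$ lies in $\ker\boldsymbol{A}_{k}$ (a situation that can occur only when $\boldsymbol{A}_{k}$ is singular).

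Next I would apply the Rayleigh--Ritz theorem to $\boldsymbol{A}_{k}$: the Rayleigh quotient $\boldsymbol{m}^{\mathrm{H}}\boldsymbol{A}_{k}\boldsymbol{m}/(\boldsymbol{m}^{\mathrm{H}}\boldsymbol{m})$ attains every value in $[\lambda_{k,\mathrm{min}},\lambda_{k,\mathrm{max}}]$ as $\boldsymbol{m}$ ranges over nonzero vectors, the extremes being attained at the corresponding eigenvectors and the intermediate values following from the spectral decomposition of $\boldsymbol{A}_{k}$ (equivalently, from continuity of the quotient along a path on the unit sphere joining the two eigenvectors). Then I would split into the two cases stated in the lemma. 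If $\det\boldsymbol{A}_{k}>0$, then all eigenvalues are positive, in particular $\lambda_{k,\mathrm{min}}>0$; since $x\mapsto1/x$ is continuous and strictly decreasing on $[\lambda_{k,\mathrm{min}},\lambda_{k,\mathrm{max}}]$, $g(\boldsymbol{m})$ sweeps exactly $[1/\lambda_{k,\mathrm{max}},1/\lambda_{k,\mathrm{min}}]$, and multiplying by $1/(K^{2}P_{k}r_{k}^{2})$ yields the first claimed interval. If $\det\boldsymbol{A}_{k}=0$ (the only alternative for a PSD matrix), then $\lambda_{k,\mathrm{min}}=0$ and $\ker\boldsymbol{A}_{k}$ is nontrivial; restricting to $\boldsymbol{m}$ in the orthogonal complement of $\ker\boldsymbol{A}_{k}$, the Rayleigh quotient lies in $(0,\lambda_{k,\mathrm{max}}]$, so its reciprocal sweeps $[1/\lambda_{k,\mathrm{max}},\infty)$, the lower endpoint still being attained at the top eigenvector; multiplying by the constant gives $[\,1/(K^{2}P_{k}r_{k}^{2}\lambda_{k,\mathrm{max}}),\infty\,)$.

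This is essentially bookkeeping around Rayleigh--Ritz, so there is no deep obstacle. The one point needing care is the singular case: one must verify that every value in $(0,\lambda_{k,\mathrm{max}}]$ is genuinely attained by the Rayleigh quotient on the complement of the null space (so that the reciprocal really does cover the whole ray up to $+\infty$ and not merely approach it), which is handled by an intermediate-value argument applied to the restriction of the quotient to the unit sphere of that complement; and one must fix the convention $g(\boldsymbol{m})=+\infty$ on $\ker\boldsymbol{A}_{k}$ so that the range is closed at its finite endpoint. (A minor notational caveat: the statement writes $\lambda_{\mathrm{max}}$ in the first interval and $\lambda_{k,\mathrm{max}}$ elsewhere; both denote the largest eigenvalue of $\boldsymbol{A}_{k}$.)
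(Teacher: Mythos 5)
Your proposal is correct and follows essentially the same route as the paper: both identify $\left\Vert \boldsymbol{m}^{\mathrm{H}}\boldsymbol{H}_{k}\right\Vert ^{2}/\left\Vert \boldsymbol{m}\right\Vert ^{2}$ as the Rayleigh quotient of $\boldsymbol{A}_{k}=\boldsymbol{H}_{k}\boldsymbol{H}_{k}^{\mathrm{H}}$, invoke Rayleigh--Ritz to bound it by $\left[\lambda_{k,\mathrm{min}},\lambda_{k,\mathrm{max}}\right]$, and split on $\det\boldsymbol{A}_{k}>0$ versus $\lambda_{k,\mathrm{min}}=0$ before taking reciprocals. You are in fact somewhat more careful than the paper about attainment of the full range (the paper only states the inequalities); your only loose phrase is that on the orthogonal complement of $\ker\boldsymbol{A}_{k}$ the quotient is bounded below by the smallest \emph{positive} eigenvalue rather than sweeping $(0,\lambda_{k,\mathrm{max}}]$ — the small values come from vectors with a nonzero kernel component — but this does not affect the conclusion.
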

\begin{proof}
Define a function $g\left(\boldsymbol{m}\right)=\frac{\parallel\boldsymbol{m}^{\mathrm{H}}\boldsymbol{H}\parallel^{2}}{\parallel\boldsymbol{m}\parallel^{2}}=\frac{\boldsymbol{m}^{\mathrm{H}}\boldsymbol{H}\boldsymbol{H}^{\mathrm{H}}\boldsymbol{m}}{\boldsymbol{m}^{\mathrm{H}}\boldsymbol{m}},$
which is the Rayleigh-Ritz of matrix $\boldsymbol{A}=\boldsymbol{H}\boldsymbol{H}^{\mathrm{H}}$.
{Suppose that} the maximum and minimum eigenvalues of $\boldsymbol{A}$
are {respectively} $\lambda_{\mathrm{max}}$ and $\lambda_{\mathrm{min}}$, function
$g\left(\boldsymbol{m}\right)$ is within the interval of $\left[\lambda_{\mathrm{min}},\lambda_{\mathrm{max}}\right]$.
Apparently, $\boldsymbol{A}\left(\neq\boldsymbol{0}\right)$ is semidefinite
and we have $\det\boldsymbol{A}\geq0$. In the case of $\det\boldsymbol{A}>0$,
the eigenvalues of $\boldsymbol{A}$ are positive. Consequently, we
have
\begin{equation}
\frac{1}{\lambda_{\mathrm{max}}}\leq\frac{1}{g\left(\boldsymbol{m}\right)}=\frac{\parallel\boldsymbol{m}\parallel^{2}}{\parallel\boldsymbol{m}^{\mathrm{H}}\boldsymbol{H}\parallel^{2}}\leq\frac{1}{\lambda_{\mathrm{min}}}.
\end{equation}
Otherwise, $\lambda_{\mathrm{min}}=0$, we have
\begin{equation}
\frac{1}{\lambda_{\mathrm{max}}}\leq\frac{1}{g\left(\boldsymbol{m}\right)}=\frac{\parallel\boldsymbol{m}\parallel^{2}}{\parallel\boldsymbol{m}^{\mathrm{H}}\boldsymbol{H}\parallel^{2}}\leq\infty.
\end{equation}
Therefore, the proof is complete.
\end{proof}
\begin{rem}
{Let} $\tau_{k}^{l\mathrm{ow}}=\frac{1}{K^{2}P_{k}r_{k}^{2}\lambda_{k,\mathrm{max}}},$
$\tau_{k}^{\mathrm{up}}=\frac{1}{K^{2}P_{k}r_{k}^{2}\lambda_{k,\mathrm{min}}}$
if $\lambda_{\mathrm{min}}>0$; otherwise{{} }$\tau_{k}^{\mathrm{up}}=\infty$,
and then $\frac{\left\Vert \boldsymbol{m}\right\Vert ^{2}}{K^{2}P_{k}\left\Vert \boldsymbol{m}^{\mathrm{H}}\boldsymbol{H}_{k}\right\Vert ^{2}r_{k}^{2}}\in\left[\tau_{k}^{\mathrm{low}},\tau_{k}^{\mathrm{up}}\right]$.
The necessary condition of $\tau$ that guarantees the feasibility
of problem (\ref{eq:SIMO3-A}) is $\text{\ensuremath{\tau\in}}\left[\tau^{\mathrm{low}},\tau^{\mathrm{up}}\right]$,
where
\begin{equation}
\tau^{\mathrm{low}}=\min_{k}\left(\tau_{k}^{\mathrm{low}}\right),\:\tau^{\mathrm{up}}=\max_{k}\left(\tau_{k}^{\mathrm{up}}\right).\label{eq:tau_low}
\end{equation}
\end{rem}
  For any given $\tau\in\left[\tau^{\mathrm{low}},\tau^{\mathrm{up}}\right]$,
sub-problem (\ref{eq:SIMO3-A}) is interpreted {as finding} the receive
beamforming vector $\boldsymbol{m}$ that makes the sub-problem feasible,
which is a check problem. By introducing $\boldsymbol{M}=\boldsymbol{m}\boldsymbol{m}^{\mathrm{H}}$,
the sub-problem given $\tau$ is converted {to}
\begin{subequations}\label{eq:SIMO4-A}
\begin{align}
 \min_{\boldsymbol{M}} \quad & 0\label{eq:pro3A1-1}\\
\mathrm{s.t.\quad} & \mathrm{Tr}\left(\boldsymbol{M}\right)\leq\tau K^{2}P_{k}\mathrm{Tr}\left(\boldsymbol{M}\boldsymbol{H}_{k}\boldsymbol{H}_{k}^{\mathrm{H}}\right)r_{k}^{2},\quad\forall k,\label{eq:pro3A2-1}\\
 & \boldsymbol{M}\succcurlyeq0\\
 & \mathrm{Tr}\left(\boldsymbol{M}\right)=1\label{eq:pro3A3-1}\\
 & \mathrm{rank}\left(\boldsymbol{M}\right)=1.\label{eq:pro3A4-1}
\end{align}

\end{subequations}

Indeed, the only difficulty of the above problem lies in the rank
one constraint (\ref{eq:pro3A4-1}). The problem can be solved by
first dropping {the} constraint (\ref{eq:pro3A4-1}) to obtain solution
$\boldsymbol{M}^{*}$. Then, the receive beamforming vector $\boldsymbol{m}^{*}$
can be calculated using the eigenvector approximation method or the
randomization technique \cite{5447068}, which is sub-optimal especially
when $\boldsymbol{M}$ is large. To guarantee the rank one constraint,
we utilize a DC representation \cite{kaiyang,2018DC}, and {convert}
problem (\ref{eq:SIMO4-A}) {to}
\begin{equation}
\min_{\boldsymbol{M}}\;\mathrm{Tr}\left(\boldsymbol{M}\right)-\left\Vert \boldsymbol{M}\right\Vert _{2}\quad\mathrm{s.t.}\;(\ref{eq:pro3A1-1}),(\ref{eq:pro3A2-1}),(\ref{eq:pro3A3-1}),(\ref{eq:pro3A4-1}),\label{eq:SIMO4-AA}
\end{equation}
which can be efficiently solved by DC programming with complexity
of $\mathcal{O}\left(N_{t}^{3}\right)$.

To find {the} solution $\tau$, we utilize the bisection technique. Specifically,
the interval $\left[\tau^{\mathrm{low}},\tau^{\mathrm{up}}\right]$
is divided into two sub-intervals $\left[\tau^{\mathrm{low}},\tau\right]$
and $\left[\tau,\tau^{\mathrm{up}}\right]$ with $\tau=\left(\tau^{\mathrm{low}}+\tau^{\mathrm{up}}\right)/2.$
If problem (\ref{eq:SIMO4-AA}) is solved for $\tau$, then the solution
is within $\left[\tau^{\mathrm{low}},\tau\right]$; otherwise we have
$\left[\tau,\tau^{\mathrm{up}}\right]$. {Repeatedly} checking the problem
until $\tau^{\mathrm{up}}-\tau^{\mathrm{low}}<\delta$, where $\delta$
is the accuracy. Such bisection technique involves $\log_{2}\frac{1}{\delta}$
{repetitive operations and, hence,} solving {the} problem (\ref{eq:SIMO3-A}) requires the
computational complexity of $\mathcal{O}\left(N_{t}^{3}\log_{2}\frac{1}{\delta}\right)$.
The algorithm is summarized in \textbf{Algorithm}~\ref{AlgorithmSubA}.

\begin{algorithm}[!hptb] 	
\caption{\label{AlgorithmSubA} Beamforming design for MIMO.}
\textbf{Input}: $\boldsymbol{H}^{\mathrm{H}}_{k}$, $r_{k}$, $\delta$\\
\textbf{Output}: $\boldsymbol{m}$, $\tau$	
\begin{algorithmic}[1]
\State calculate $\tau^{\mathrm{low}}$,  $\tau^{\mathrm{up}}$ based on (\ref{eq:tau_low})		
\State \textbf{while} $(\tau^{\mathrm{up}}-\tau^{\mathrm{low}})>\delta$
\State \quad \textbf{if} problem  (\ref{eq:SIMO4-AA}) is infeasible
\State \qquad $\tau^{\mathrm{up}}=(\tau^{\mathrm{up}}+\tau^{\mathrm{low}})/2$
\State \quad \textbf{else}
\State \qquad $\tau^{\mathrm{low}}=(\tau^{\mathrm{up}}+\tau^{\mathrm{low}})/2$	
\end{algorithmic}
\end{algorithm}

Given the obtained receive beamforming vector $\boldsymbol{m}$, the
original problem (\ref{eq:SIMO3}) is {reduced to the sub-problem below:}
\begin{equation}
\min_{r_{k},\tau}\quad\tau\qquad\mathrm{s.t.}\quad(\ref{eq:pro1-2-1-1}),(\ref{eq:pro1_3-1-1}),(\ref{eq:SIMO3-2}).\label{eq:SIMO3-B}
\end{equation}

Denoting the equivalent channel vector $\boldsymbol{h}'_{k}$ as $\boldsymbol{m}^{\mathrm{H}}\boldsymbol{H}_{k}$, {the}
sub-problem (\ref{eq:SIMO3-B}) can be transformed into the problem
under the MISO case with channel vector $\boldsymbol{h}'_{k}$. Note
that the sub-problem under the SIMO case is equivalent to that of
the SISO case by defining equivalent channel coefficient $h'_{k}$
as $\boldsymbol{m}^{\mathrm{H}}\boldsymbol{h}_{k}$. Therefore, we
can readily {derive the closed-form solution in} the MISO case.
{We let} $l_{k}=\frac{1}{r_{k}}$, $c_{k}=\frac{1}{K\sqrt{P_{k}}\left\Vert \boldsymbol{h}_{k}^{'}\right\Vert }=\frac{1}{K\sqrt{P_{k}}\left\Vert \boldsymbol{m}^{\mathrm{H}}\boldsymbol{H}_{k}\right\Vert }$, {and then}
the solution is given by $l_{i}=\mathrm{clip}\left(\frac{c_{k}l_{k}}{c_{i}},[\frac{1}{r_{\max}},\frac{1}{r_{\min}}]\right)$,
where $\sum_{i=1}^{K}l_{i}=K$.

Thus, the sub-optimal solution of problem (\ref{eq:SIMO3}) can be
obtained by alternatively solving sub-problems (\ref{eq:SIMO4-AA})
and (\ref{eq:SIMO3-B}). For each iteration, the computational complexity
is $\mathcal{O}\left(N_{t}^{3}\log_{2}\frac{1}{\delta_{1}}+K\log_{2}\frac{1}{\delta_{2}}\right)$,
where $\delta_{1}$ and $\delta_{2}$ are the accuracy of solving
receive beamforming vector $\boldsymbol{m}$ and DLR ratio $r_{k}$,
respectively. The iterative method proposed is summarized in\textbf{
Algorithm}~\ref{AlgorithmMIMO}.
\begin{algorithm}[!htpb] 	
\caption{\label{AlgorithmMIMO} Iterative Learning Rate and Receive Beamforming.}
\textbf{Input}: $\boldsymbol{H}_{k}$\\
\textbf{Output}: $\boldsymbol{m}$, $r_{k}$, $\tau$\\
\textbf{Initialize}: $r_k=1$
\begin{algorithmic}[1]
\State \textbf{do} loop
\State \quad given $r_k$, solve problem (\ref{eq:SIMO3-A})  using \textbf{Algorithm} \ref{AlgorithmSubA}

\State \quad given $\boldsymbol{m}$, solve problem (\ref{eq:SIMO3-B})  using \textbf{Algorithm} \ref{AlgorithmMISO}

\State \textbf{until} $\tau$ converges
	
\end{algorithmic}
\end{algorithm}

\section{Asymptotic Analysis and {Receive Beamforming Design}}\label{S:AAna}

This section presents the theoretical analysis of {the MSE and the DLR ratio in} the MISO, SIMO,
and MIMO scenarios, {when} the numbers of {antennas} $N_{d}$ and $N_{t}$ {increase to} infinity. Based on the asymptotic analysis, we propose a {near-optimal} and closed-form receive beamforming solution. Note that each device is assumed to have equal maximum power $P_{k}=P$.

\subsection{MISO}

In the MISO case, we present asymptotic analysis when {the number of antennas} $N_{d}$ at the devices goes {to} infinity. Since the channels
between the devices and the aggregator are assumed to be independently
Rayleigh distributed, we have
\begin{equation}
\left\Vert \boldsymbol{h}_{k}\right\Vert \rightarrow \sqrt{N_{d}},
\end{equation}
\begin{equation}
c_{k}=\frac{1}{K\sqrt{P}\left\Vert \boldsymbol{h}_{k}\right\Vert }\rightarrow\frac{1}{K\sqrt{P}\sqrt{N_{d}}},
\end{equation}
which {suggest} that $c_{i}=c_{j},\forall i,j\in\mathcal{K}$. As a
consequence, lower bound $\mathrm{MSE^{lb}}$ can be achieved according
to \textbf{Theorem} \ref{thm:MISO} and the {equalities in} both (a) and
(b) {of} (\ref{eq:MISO_ther}) {are} guaranteed. Accordingly, the
MSE and $r_{k}$ {become}
\begin{equation}
\mathrm{MSE}=\left(\frac{1}{\sqrt{P}\sum_{i=1}^{K}\left\Vert \boldsymbol{h}_{i}\right\Vert }\right)^{2}\sigma^{2}\rightarrow\frac{\sigma^{2}}{PK^{2}N_{d}},
\end{equation}
\begin{equation}
r_{k}=\frac{1}{l_{k}}=c_{k}\sqrt{P}\sum\nolimits_{i=1}^{K}\left\Vert \boldsymbol{h}_{i}\right\Vert =\frac{\sum_{i=1}^{K}\left\Vert \boldsymbol{h}_{i}\right\Vert }{K\left\Vert \boldsymbol{h}_{k}\right\Vert }\rightarrow1,
\end{equation}
where the achieved MSE is inversely proportional to $K^{2}$ and $N_{d}$.

\subsection{SIMO}

In the SIMO case, we {let} $h'_{k}=\boldsymbol{m}^{\mathrm{H}}\boldsymbol{h}_{k}$,
$c_{k}=\frac{1}{K\sqrt{P}\left\Vert h_{k}'\right\Vert }$ and $l_{k}=\frac{1}{r_{k}}$.
{With} the increase {number of antennas} $N_{t}$, the channels between the
devices and the BS become asymptotically orthogonal{, i.e.,}
\begin{equation}
\left\langle \boldsymbol{h}_{i},\boldsymbol{h}_{j}\right\rangle \rightarrow \begin{cases}
N_{t} & i=j\\
0 & i\neq j
\end{cases}.\label{eq:property}
\end{equation}
By exploiting the {above property, the receive beamforming vector $\boldsymbol{m}$ can be designed simply as}
\begin{equation}
\boldsymbol{m}=\frac{\Sigma_{k}^{K} \left(\boldsymbol{h}_{k}/\left\Vert \boldsymbol{h}_{k}\right\Vert\right)}{\left\Vert\Sigma_{k}^{K} \left(\boldsymbol{h}_{k}/\left\Vert \boldsymbol{h}_{k}\right\Vert\right)\right\Vert}.\label{eq:RBF1}
\end{equation}
Accordingly, the equivalent channel coefficient $h_{i}'$ can
be {rewritten} as
\begin{equation}
\begin{array}{*{20}{l}}
h_{i}'\!=\!\boldsymbol{m}^{\mathrm{H}}\boldsymbol{h}_{i}
\displaystyle\!=\!\frac{\Sigma_{k}^{K} \left(\boldsymbol{h}_{k}^{\mathrm H}/\left\Vert \boldsymbol{h}_{k}\right\Vert\right)}{\left\Vert\Sigma_{k}^{K} \left(\boldsymbol{h}_{k}/\left\Vert \boldsymbol{h}_{k}\right\Vert\right)\right\Vert}\boldsymbol{h}_{i}
\rightarrow \displaystyle\sqrt{\frac{N_{t}}{K}},
\end{array}
\end{equation}
which implies that $\left\Vert h_{i}'\right\Vert =\left\Vert h_{j}'\right\Vert ,\forall i,j\in\mathcal{K}$.
Therefore, the {equalities in (a) and (b) of} (\ref{eq:MIMO_ther}) can
be guaranteed and $\mathrm{MSE^{lb}}$ can be achieved according to {the} \textbf{Theorem} \ref{thm:MIMO}. In this case, the MSE and $r_{k}$ {is derived as}
\begin{align}
\mathrm{MSE} & =\left(\frac{1}{\sqrt{P}\sum_{i=1}^{K}\left\Vert h_{i}'\right\Vert }\right)^{2}\sigma^{2}\rightarrow\frac{\sigma^{2}}{PKN_{t}},
\end{align}
\begin{equation}
r_{k}=\frac{1}{l_{k}}=c_{k}\sqrt{P}\sum_{i=1}^{K}\left\Vert h_{i}'\right\Vert =\frac{\sum_{i=1}^{K}\left\Vert h_{i}'\right\Vert }{K\left\Vert h_{k}'\right\Vert }\rightarrow\frac{K\sqrt{\frac{N_{t}}{K}}}{K\sqrt{\frac{N_{t}}{K}}}=1,
\end{equation}
where the achieved MSE is inversely proportional to $K$ and $N_{t}$.

\subsection{MIMO}

In the MIMO case, both devices and the aggregator have multiple antennas.
This sub-section presents the analysis when $N_{d}$ and $N_{t}$
go to infinity, respectively. The equivalent channel vector between
device $k$ and the aggregator is denoted as $\boldsymbol{h}'_{k}=\boldsymbol{m}^{\mathrm{H}}\boldsymbol{H}_{k}$,
where $\boldsymbol{H}_{k}\in\mathbb{C}^{N_{t}\times N_{d}}$.  Since
$\boldsymbol{H}_{i}\boldsymbol{H}_{i}^{\mathrm{H}}$ is semidefinite
and $\left\Vert \boldsymbol{m}\right\Vert =1$, the norm of $\boldsymbol{h}_{i}'$
satisfies
\begin{equation}
\left\Vert \boldsymbol{h}_{i}'\right\Vert =\sqrt{\boldsymbol{m}^{\mathrm{H}}\boldsymbol{H}_{i}\boldsymbol{H}_{i}^{\mathrm{H}}\boldsymbol{m}}=\sqrt{\frac{\boldsymbol{m}^{\mathrm{H}}\boldsymbol{H}_{i}\boldsymbol{H}_{i}^{\mathrm{H}}\boldsymbol{m}}{\boldsymbol{m}^{\mathrm{H}}\boldsymbol{m}}}.
\end{equation}
{According to the Rayleigh-Ritz property, the value of $\left\Vert \boldsymbol{h}_{i}'\right\Vert$ is within the range of $\left[\sqrt{\lambda_{i,\mathrm{min}}},\sqrt{\lambda_{i,\max}}\right]$,
where $\lambda_{i,\mathrm{min}}$ and $\lambda_{i,\mathrm{max}}$
are respectively} the minimum and maximum eigenvalues of $\boldsymbol{H}_{i}\boldsymbol{H}_{i}^{\mathrm{H}}$.

First, we give the analysis when $N_{d}\rightarrow\infty$ for $N_{d}>N_{t}$.
{In} this case, the channel between  device $i$ and each antenna
at the BS is asymptotically orthogonal. Consequently, we have
\begin{equation}
\boldsymbol{H}_{i}\boldsymbol{H}_{i}^{\mathrm{H}} \rightarrow N_{d}\boldsymbol{I}_{N_{t}\times N_{t}},
\end{equation}
whose eigenvalues share the same {values} of $\sqrt{N_{d}}$. Thus, according
to the Rayleigh-Ritz property, for any beamforming vector $\boldsymbol{m}$
with $\left\Vert \boldsymbol{m}\right\Vert =1$, we have
\begin{equation}
\left\Vert \boldsymbol{h}_{1}'\right\Vert =\left\Vert \boldsymbol{h}_{2}'\right\Vert =\ldots=\left\Vert \boldsymbol{h}_{K}'\right\Vert \rightarrow\sqrt{N_{d}},
\end{equation}
which implies that the condition of the equalities of (a) and (b) in (\ref{eq:MIMO_ther})
are guaranteed. Based on \textbf{Theorem }\ref{thm:MIMO}, the MSE
and $r_{k}$ are obtained such that
\begin{align}
\mathrm{MSE} & =\left(\frac{1}{\sqrt{P}\sum_{i=1}^{K}\left\Vert \boldsymbol{h}_{i}'\right\Vert }\right)^{2}\sigma^{2}\rightarrow\frac{\sigma^{2}}{PK^{2}N_{d}},
\end{align}
\begin{equation}
r_{k}=\frac{1}{l_{k}}=c_{k}\sqrt{P}\sum_{i=1}^{K}\left\Vert \boldsymbol{h}_{i}'\right\Vert =\frac{\sum_{i=1}^{K}\left\Vert \boldsymbol{h}_{i}\right\Vert }{K\left\Vert \boldsymbol{h}_{k}\right\Vert }\rightarrow1.
\end{equation}
Note that the {MSE achieved} is inversely proportional to $K^{2}$ and
$N_{d}$, irrespective of $N_{t}$ when $N_{d}\rightarrow\infty$ for
$N_{d}>N_{t}$. The reason is that $\boldsymbol{H}_{i}$ has
full row rank with equal singular values of $\sqrt{N_{d}}$, which
indicates that the equivalent channel $\boldsymbol{h}_{i}^{'}$ has
the same power of $\sqrt{N_{d}}$ {regardless} of $N_{t}$.

Next, the analysis of the MSE {when $N_{t}\rightarrow\infty$ for
$N_{t}>N_{d}$ is provided}. The channels between each antenna of device $i$ and
the BS are asymptotically orthogonal with power $N_{t}$. Thus, the
rank of $\boldsymbol{H}_{i}$ is $r=\mathrm{rank}\left(\boldsymbol{H}_{i}\right)=N_{d}$.
By using singular value decomposition (SVD), {we readily have}
\begin{equation}
\boldsymbol{H}_{i}\boldsymbol{H}_{i}^{\mathrm{H}}\!=\!\boldsymbol{U}\boldsymbol{\Sigma}^{2}\boldsymbol{U}^{\mathrm{H}} \rightarrow \boldsymbol{U}\!\left[\begin{array}{cc}
N_{t}\boldsymbol{I}_{N_{d}\times N_{d}} & \boldsymbol{0}\\
\boldsymbol{0} & \boldsymbol{0}
\end{array}\right]\!\boldsymbol{U}^{\mathrm{H}},
\end{equation}
where the first $N_{d}$ columns of $\boldsymbol{U}$ are $\boldsymbol{U}_{r}=\left[\frac{\boldsymbol{H}_{i}\left[:,1\right]}{\left\Vert \boldsymbol{H}_{i}\left[:,1\right]\right\Vert },\frac{\boldsymbol{H}_{i}\left[:,2\right]}{\left\Vert \boldsymbol{H}_{i}\left[:,2\right]\right\Vert },\ldots,\frac{\boldsymbol{H}_{i}\left[:,N_{d}\right]}{\left\Vert \boldsymbol{H}_{i}\left[:,N_{d}\right]\right\Vert }\right]$.

Thus, the minimum and maximum eigenvalues of $\boldsymbol{H}_{i}\boldsymbol{H}_{i}^{\mathrm{H}}$
are {respectively} $\lambda_{i,\mathrm{min}}=0$, $\lambda_{i,\mathrm{max}}=N_{t}$.
Besides, the asymptotically orthogonal property {dictates that} the column
spaces spanned by $\boldsymbol{H}_{k},\forall k\in\mathcal{K}$ are
orthogonal, i.e.,
\begin{equation}
\mathrm{span}\left(\boldsymbol{H}_{i}\right)\bot\mathrm{span}\left(\boldsymbol{H}_{j}\right),\:\forall i\neq j,
\end{equation}
which suggests that $\boldsymbol{h}_{i}\perp\boldsymbol{h}_{j}$ for any $\boldsymbol{h}_{i} \in \mathrm{span}\left(\boldsymbol{H}_{i}\right)$ and $ \boldsymbol{h}_{j} \in \mathrm{span}\left(\boldsymbol{H}_{j}\right)$.

Therefore, the receive beamforming vector can be designed in a simple
manner as
\begin{equation}
\boldsymbol{m}=\frac{\widetilde{\boldsymbol{h}}_{1}+\widetilde{\boldsymbol{h}}_{2}+\ldots+\widetilde{\boldsymbol{h}}_{K}}{\left\Vert \widetilde{\boldsymbol{h}}_{1}+\widetilde{\boldsymbol{h}}_{2}+\ldots+\widetilde{\boldsymbol{h}}_{K}\right\Vert },
\end{equation}
where $\widetilde{\boldsymbol{h}}_{i}$ is the eigenvector of $\boldsymbol{H}_{i}\boldsymbol{H}_{i}^{\mathrm{H}}$,
which can be any column vector of $\boldsymbol{U}_{r}$. Further,
the equivalent channels {can be expressed as}
\begin{align}
\boldsymbol{h}_{i}' & =\boldsymbol{m}^{\mathrm{H}}\boldsymbol{H}_{i}=\frac{\left(\widetilde{\boldsymbol{h}}_{1}+\widetilde{\boldsymbol{h}}_{2}+\ldots+\widetilde{\boldsymbol{h}}_{K}\right)^{\mathrm{H}}}{\left\Vert \widetilde{\boldsymbol{h}}_{1}+\widetilde{\boldsymbol{h}}_{2}+\ldots+\widetilde{\boldsymbol{h}}_{K}\right\Vert }\boldsymbol{H}_{i}\nonumber\\
&\rightarrow\frac{\widetilde{\boldsymbol{h}}_{i}^{\mathrm{H}}\boldsymbol{H}_{i}}{\sqrt{K}}=\frac{\sqrt{N_{t}}}{\sqrt{K}}\boldsymbol{I}_{e},
\end{align}
where $\boldsymbol{I}_{e}=[\underbrace{0,\ldots,}_{e-1}1,\underbrace{\ldots, 0}_{N_{d}-e}]^{\mathrm H}$
if the $e$-th column of $\boldsymbol{U}_{r}$ is selected. Hence, we have
\begin{equation}
\left\Vert \boldsymbol{h}_{i}'\right\Vert =\left\Vert \frac{\sqrt{N_{t}}}{\sqrt{K}}\boldsymbol{I}_e\right\Vert \rightarrow\sqrt{\frac{N_{t}}{K}},\forall i.
\end{equation}
According to \textbf{Theorem }\ref{thm:MIMO}, the MSE and $r_{k}$
{can be obtained as}
\begin{align}
\mathrm{MSE} & =\left(\frac{1}{\sqrt{P}\sum_{i=1}^{K}\left\Vert \boldsymbol{h}_{i}'\right\Vert }\right)^{2}\sigma^{2}\rightarrow\frac{\sigma^{2}}{PKN_{t}},
\end{align}
\begin{align}
r_{k}=\frac{1}{l_{k}} & =c_{k}\sqrt{P}\sum_{i=1}^{K}\left\Vert \boldsymbol{h}_{i}'\right\Vert =\frac{\sum_{i=1}^{K}\left\Vert \boldsymbol{h}_{i}'\right\Vert }{K\left\Vert \boldsymbol{h}_{k}'\right\Vert }\rightarrow1.
\end{align}
Note that the achieved MSE is inversely proportional to $K$ and
$N_{t}$, irrespective of $N_{d}$ when $N_{t}\rightarrow\infty$ for
$N_{t}>N_{d}$. {This is because} the projections of the designed
$\boldsymbol{m}$ on sub-spaces $\mathrm{span}\left(\boldsymbol{H}_{i}\right),\forall i\in\mathcal{K}$
have the same power $\frac{\left\Vert \boldsymbol{m}\right\Vert }{K}=\frac{1}{K}$,
since $\boldsymbol{H}_{i}$ is a column full rank matrix with equal
singular values $N_{t}$ and the spanned sub-spaces are orthogonal.

\subsection{{Observations}}

{We observe the following facts from the above asymptotic analysis}:
\begin{itemize}
\item A larger number of antennas evidently provides more degree of freedom to {align the signals from the distributed devices. This in turn brings down the performance gain obtained by employing DLR, and the DLR ratio is pushed closer to $1$.}
\item When the number of antennas {increases} to infinity, the designed receive beamforming
$\boldsymbol{m}$ by simply summing up the normalized channel vectors
can achieve the lower bound $\mathrm{MSE}^{\mathrm{lbm}}$, as the channel
vectors under this case are asymptotically orthogonal with equal power.
\item The MSE is inversely proportional to {the} number of devices $K$ and {the} number of antennas $N_{t}$ in the SIMO case, while the MSE is inversely proportional
to $K^{2}N_{d}$ in the MISO case. The reason{, plainly,} is that the equivalent
channel power in the SIMO case is $\sqrt{\frac{N_{t}}{K}}$, whereas the
channel power in the MISO case is $\sqrt{N_{d}}$.
\item In the MIMO scenario, two cases, i.e. $N_{d}\rightarrow\infty$ for $N_{d}>N_{t}$,
and $N_{t}\rightarrow\infty$ for $N_{d}>N_{t}$, are considered. The {attained}
MSE is $\frac{\sigma^{2}}{PK^{2}N_{d}}$ in the former case, which
shares the same value with the MISO case regardless of $N_{t}$. Similarly,
the latter case is shown to have identical MSE, i.e., $\frac{\sigma^{2}}{PKN_{t}}$
with the SIMO scenario, independent of $N_{d}$. Such results
can be easily explained by comparing the equivalent channels in the MIMO
case and the channels in the MISO and SIMO scenarios.
\end{itemize}

\section{Simulation Results}\label{S:SR}

\begin{figure*}[!ht]
\centering
\subfigure[]{\includegraphics[width=0.499\columnwidth]{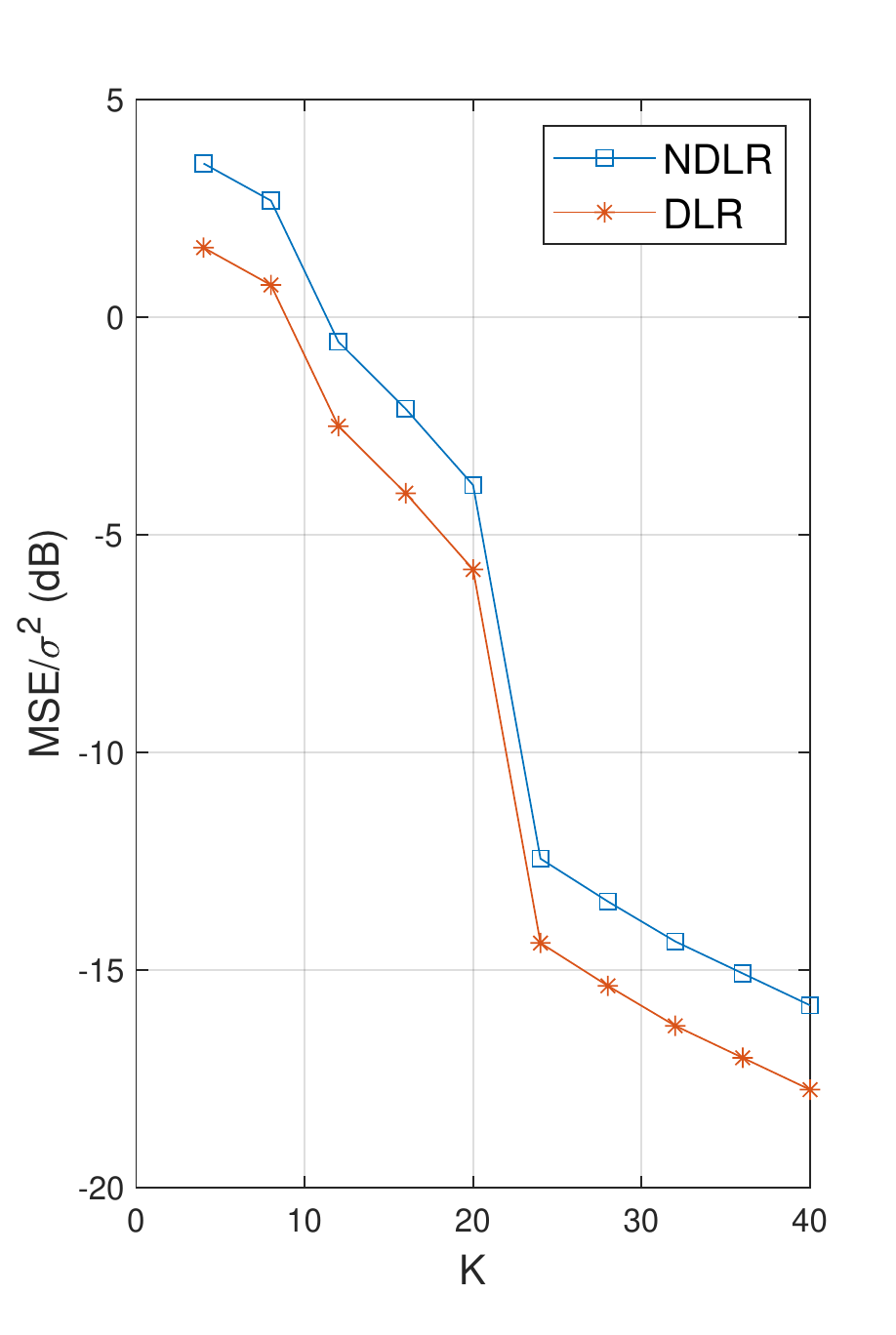}}
\subfigure[]{\includegraphics[width=0.499\columnwidth]{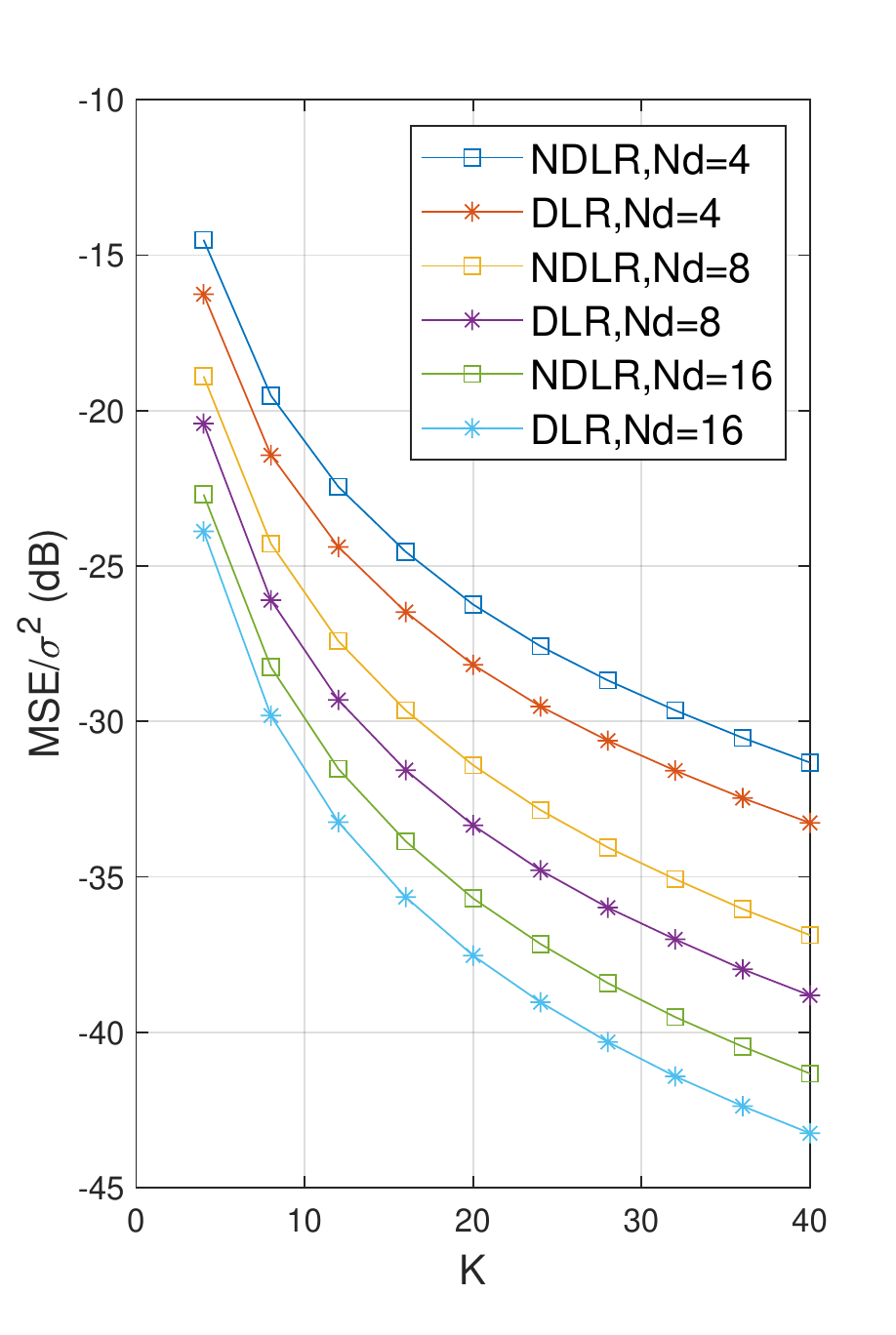}}
\subfigure[]{\includegraphics[width=0.499\columnwidth]{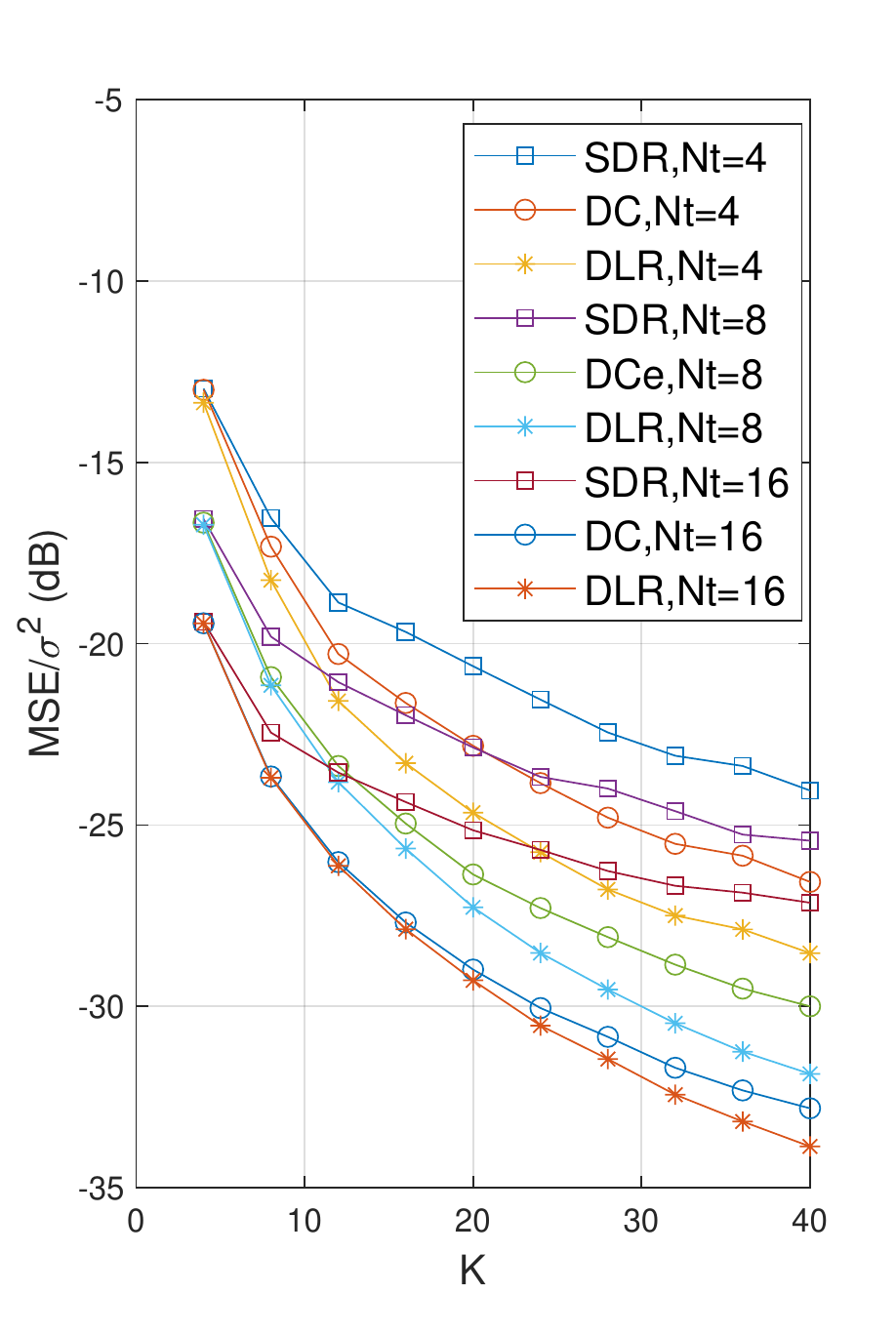}} \subfigure[]{\includegraphics[width=0.499\columnwidth]{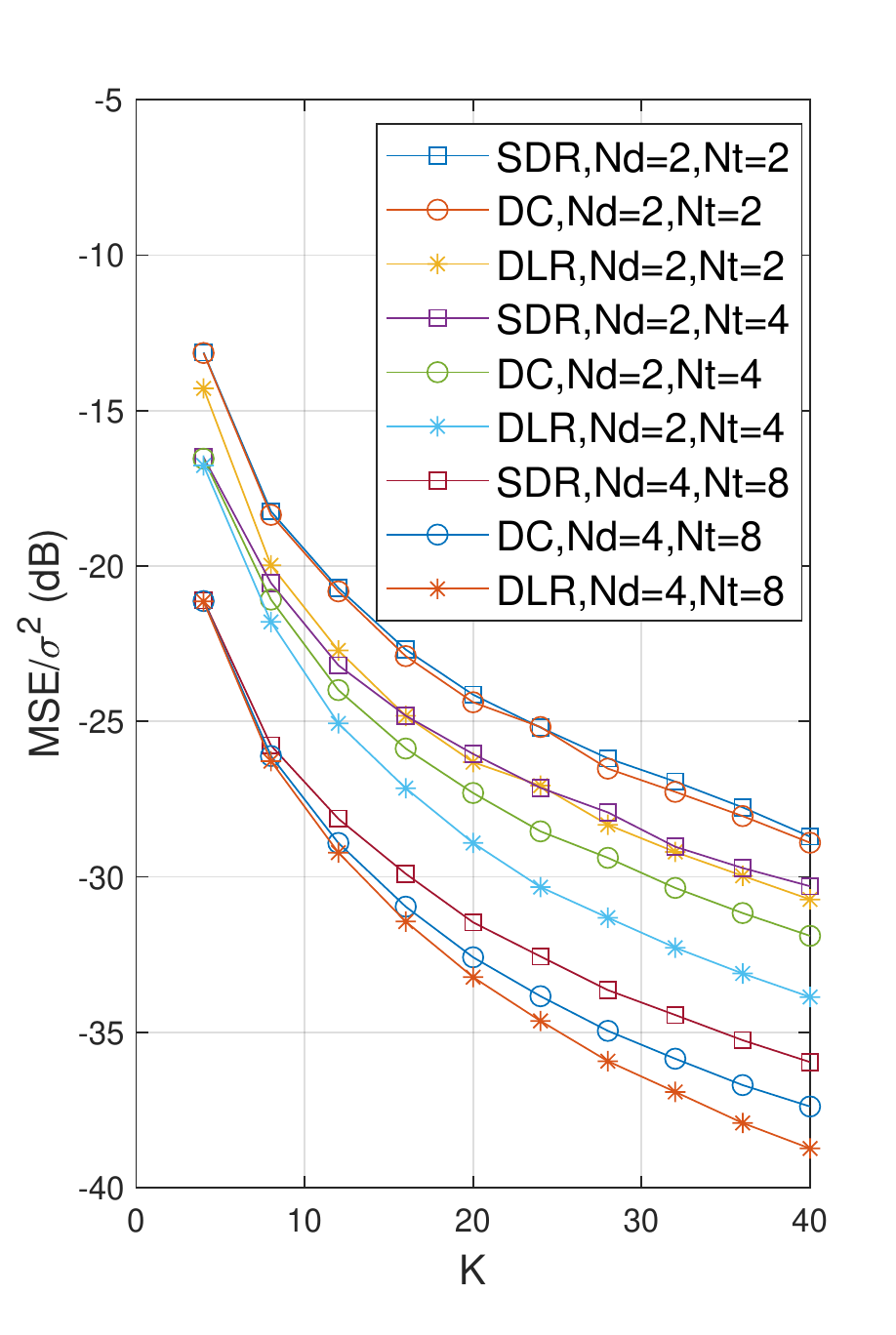}}
\caption{The aggregate error versus the number of devices $K$. (a) SISO scenario.
(b) MISO scenario with $N_{d}=4,8,16$. (c) SIMO scenario with $N_{t}=4,8,16$.
(d) MIMO scenario with $N_{d}=2,N_{t}=2$, $N_{d}=2,N_{t}=4$, $N_{d}=4,N_{t}=8$.}
\label{fig:MSE-V.S.-K}
\end{figure*}

Simulation results are given in this section to {demonstrate} the effectiveness
of the proposed DLR design and the performance of the proposed {near-optimal}
and closed-form receive beamforming solution when massive antennas are
applied. The proposed method is compared with {the existing approaches without
DRL}. The performance of the compared method is {labelled} as `NDLR' in the SISO and MISO scenarios, {which does not optimize the receive beamforming but only optimizes the transmit coefficients (vector) $ b_{k}\left(\boldsymbol{b}_{k}\right)$ using the method proposed in \cite{8364613}}.
In the SIMO and MIMO scenarios, the SDR and DC methods are compared to obtain receive beamforming vector $\boldsymbol{m}$, which are {labelled} as `SDR' and `DC'. It should be noted that all {the} devices participate in the update of the global model. We set the maximum
transmit power of each device $k$ as $P_{k}=0$ dB, which experiences
independent Rayleigh fading. To show the impact of DLR on the training and inference performance, we use FL to implement the classification tasks on MNIST and CIFAR10 datasets. Assume that the data stored at each device has equal {size. To ensure that the algorithm has adequate supplies of data to support feature extraction and meaningful learning, $K\geq20$ has to be satisfied.} MLP and ResNet18 neural networks are adopted to train on these two datasets for $200$ epochs,
where $\mu$ is set to $0.01$.

\subsection{Performance on MSE using DLR}

To {showcase} the effectiveness of the proposed DLR, we conduct simulations
under the SISO, MISO, SIMO and MIMO scenarios, {where the boundaries of DLR are set to $r_{\min}=1/1.2$ and $r_{\max}=1/0.8$.} Fig. \ref{fig:MSE-V.S.-K} displays $\mathrm{MSE}/\sigma^{2}$ with respect to the number of devices  $K$. It shows that the aggregate error decreases with
the increase of device number $K$ in four scenarios, which is due
to the averaging operation over $K$ devices. Specifically, more devices
indicate smaller scaling factor $\eta$ and therefore smaller error
from (\ref{eq:error}). The aggregate error is further reduced by
additionally considering DLR, compared to the methods utilizing only
on wireless resource {in \cite{8364613,kaiyang}}, which validates \textbf{Theorem \ref{thm:MISO}} and \textbf{Theorem \ref{thm:MIMO}}. It also reveals that performance gap extended with the increase of the number of devices $K$. The reason behind is that the increase of $K$ lead to a larger difference between the maximum and minimum signal power. Compared with the SISO scenario, multiple
antennas at the devices or/and the aggregator offer diversity
gain to combat  fading. Thereby, more antenna deployment
results in a smaller aggregate error while the performance gain obtained
by DLR is shrinking.

\begin{figure}[!htbp]
	\centering
	\includegraphics[width=0.99\columnwidth]{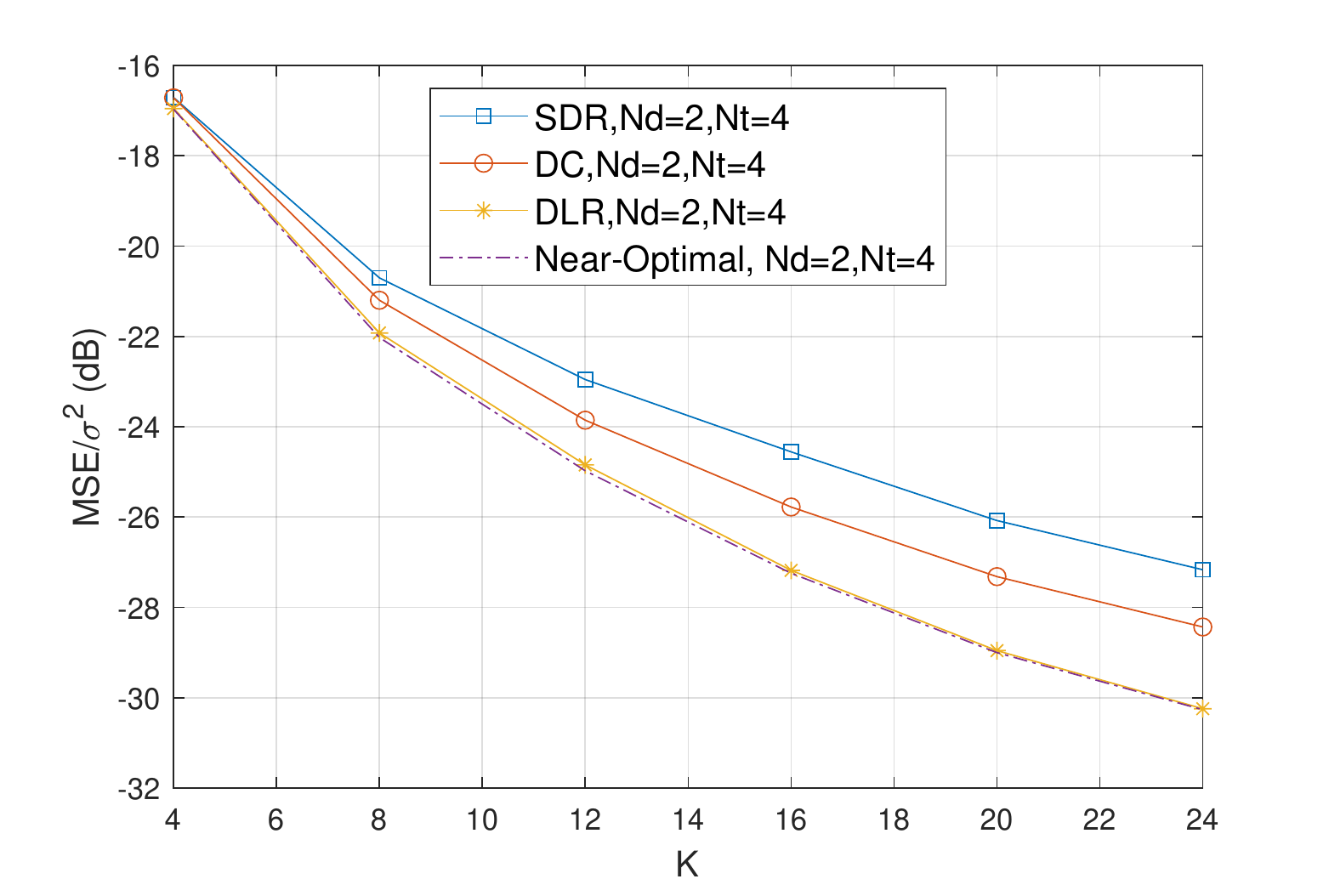}
	\caption{{The optimality of the proposed iterative method.}}
	\label{fig:near_opt}
\end{figure}

To showcase the optimality of the proposed iterative algorithm, we conduct numerical simulations to obtain the near-optimal solution in the MIMO scenario, as problem (\ref{eq:SIMO2}) is nonconvex and {the optimal solution is difficult to obtain. Note that} the near-optimal solution is obtained by initializing $10$ starting points and selecting the one with the minimum MSE. Fig.~\ref{fig:near_opt} {reveals} that the proposed iterative learning rate and receive beamforming algorithm can achieve relatively close performance with the near-optimal solution. In order to {examine} the impact of DLR, we {show} the channel gain/equivalent channel power and the corresponding DLR values in Fig.~\ref{fig:bar}. For {clear illustration}, the devices are indexed by the channel gain/equivalent channel gain in an ascending order.
As displayed in Fig. \ref{fig:bar}, the learning rate $\mu_{k}=r_{k}\mu$
is smaller for device $k$ with higher channel gain/equivalent channel
gain, while a larger learning rate is used for the local model update
with lower channel gain/equivalent channel gain, which can be explained
using (\ref{eq:ther}).

\begin{figure*}[!htbp]
\centering
\subfigure[]{\includegraphics[width=0.94\columnwidth]{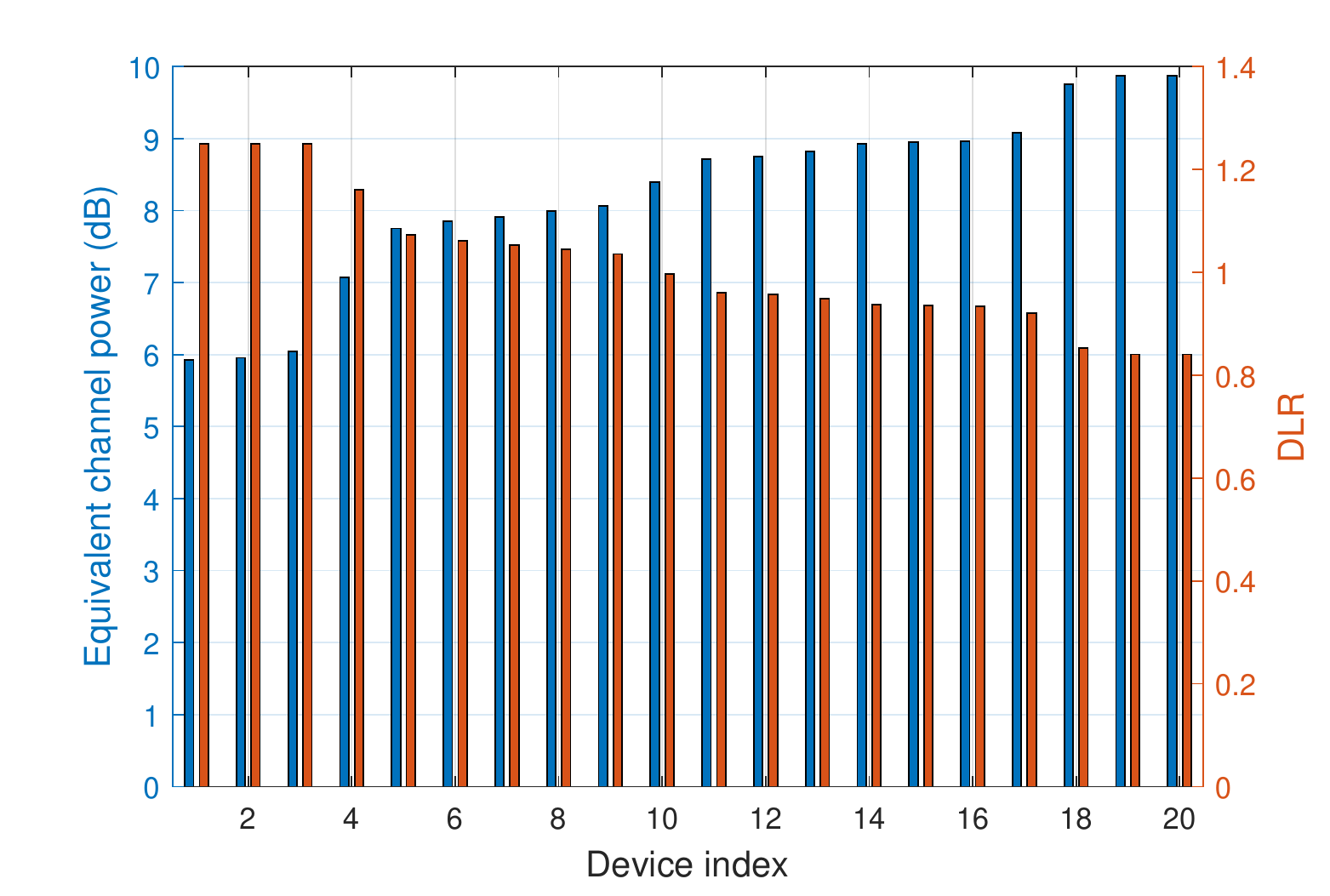}}
\subfigure[]{\includegraphics[width=0.94\columnwidth]{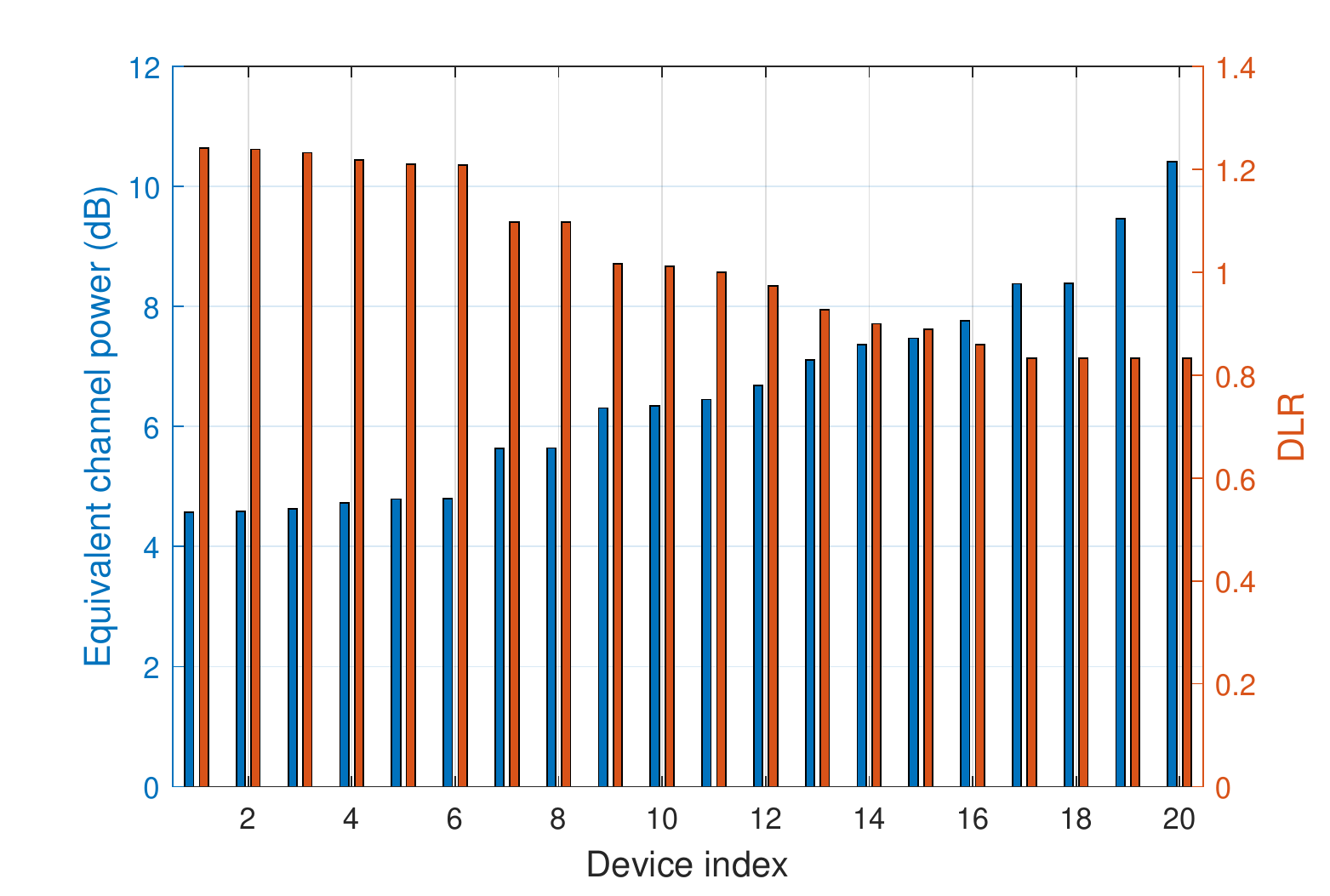}}
\caption{The equivalent channel power and the corresponding DLR. (a) MISO scenario
with $N_{d}=8$, (b) MIMO scenario with $N_{d}=4,N_{t}=4$.}
\label{fig:bar}
\end{figure*}

Simulations are conducted on different {boundaries} of DLR ratios,
i.e., $r_{\mathrm{min}}$ and $r_{\mathrm{max}}$, under four scenarios
to illustrate their impact. Note that the case of
$r_{\mathrm{min}}=r_{\mathrm{max}}=1$ is equivalent to conventional
methods without considering DLR. Fig.~\ref{fig:MSE_ratio} shows that
a larger range of DLR ratio leads to the decreasing trend of MSE.
In Fig. \ref{fig:MSE_ratio}(c) and Fig.~\ref{fig:MSE_ratio}(d),
the performance in the cases of $1/r_{\max}=0.4,1/r_{\min}=1.6$ and
$1/r_{\max}=0.6,1/r_{\min}=1.4$ is close when $K\leq10$. Recall
that the DLR is inversely proportional with the equivalent channel
gain. Such close performance can be explained by the reason that the
obtained receive beamforming vector $\boldsymbol{m}$ can well combat
the distortion due to the fading channels {when $K$ is small}.

\begin{figure*}[!htbp]
\centering
\subfigure[]{\includegraphics[width=0.499\columnwidth]{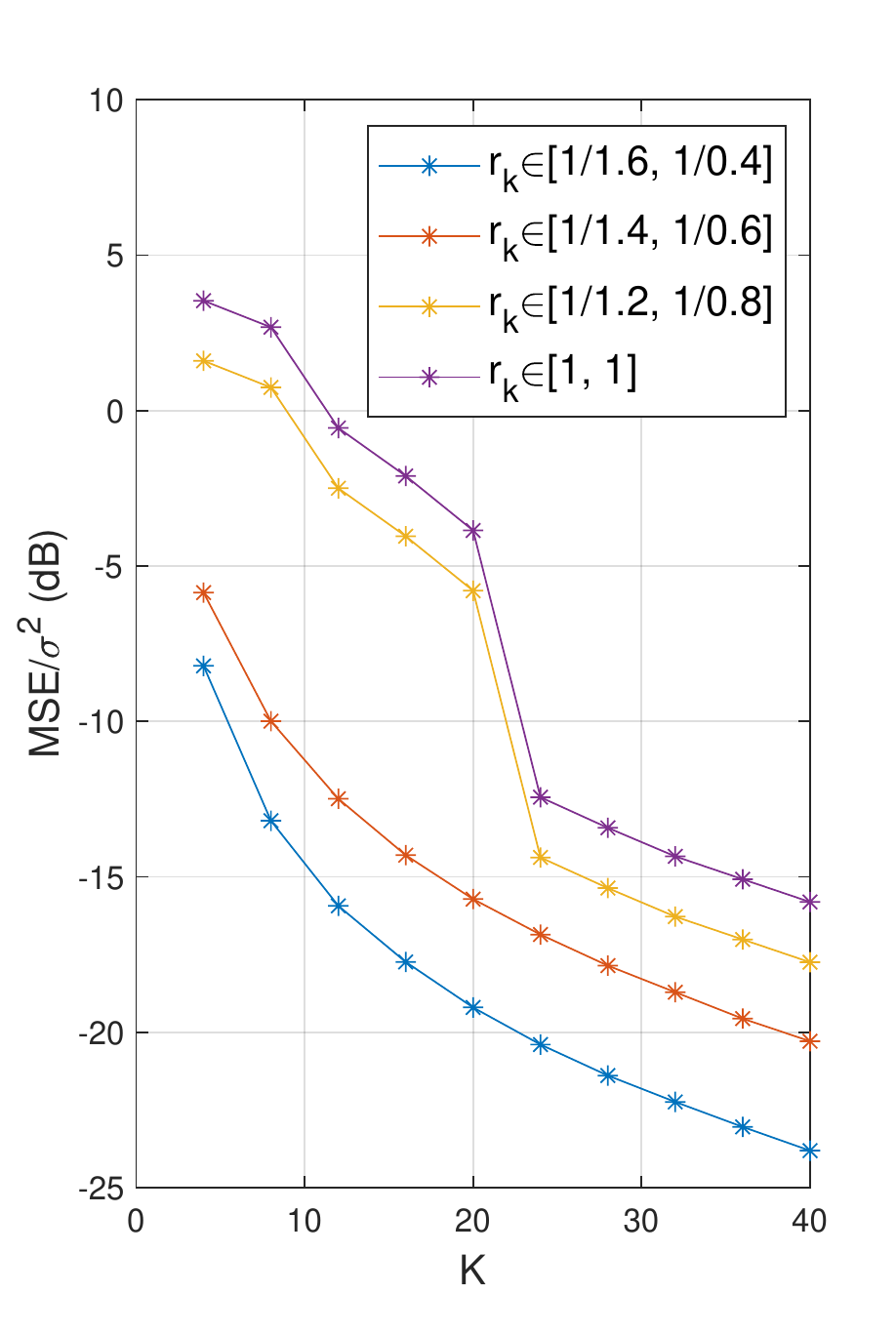}}
\subfigure[]{\includegraphics[width=0.499\columnwidth]{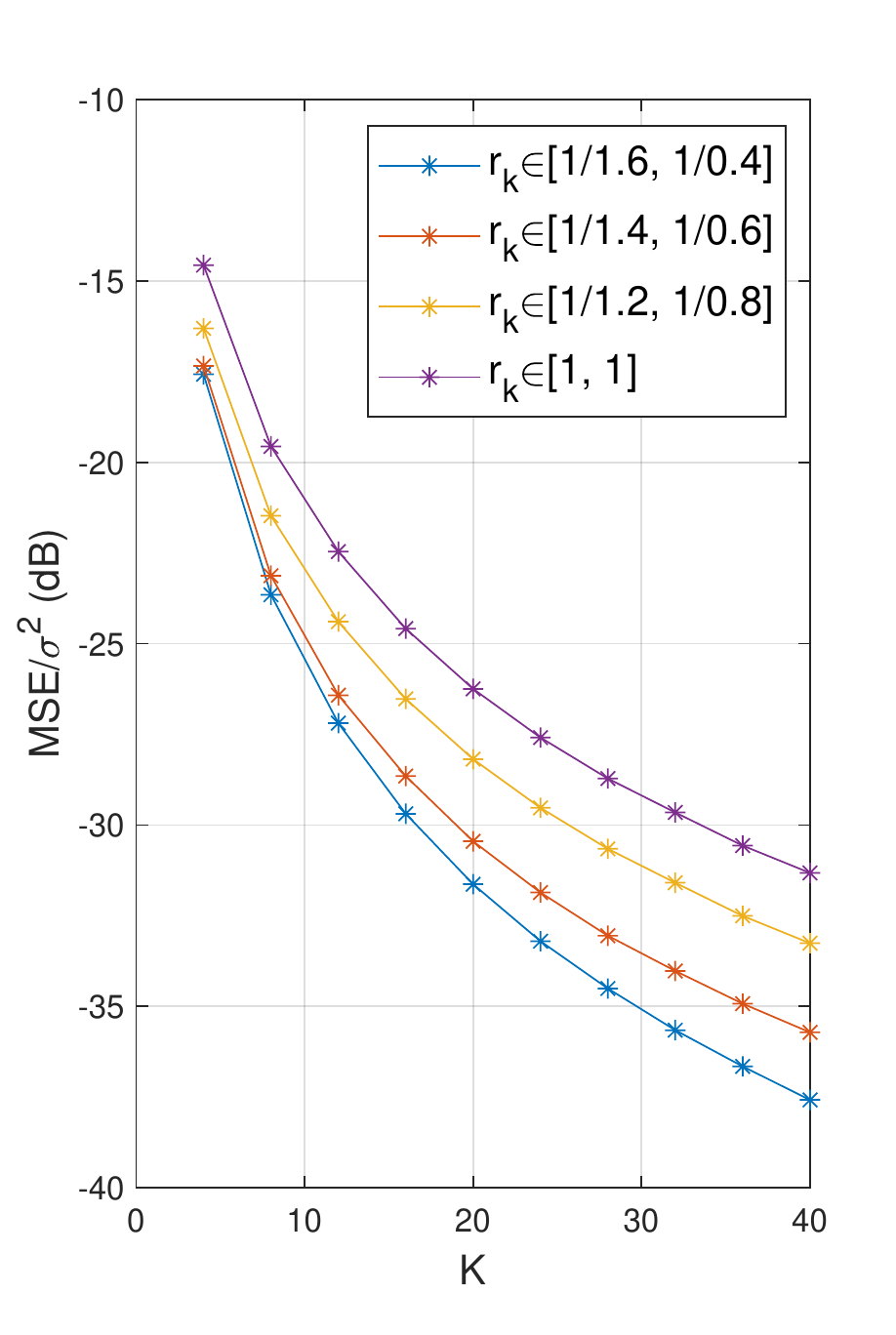}}
\subfigure[]{\includegraphics[width=0.499\columnwidth]{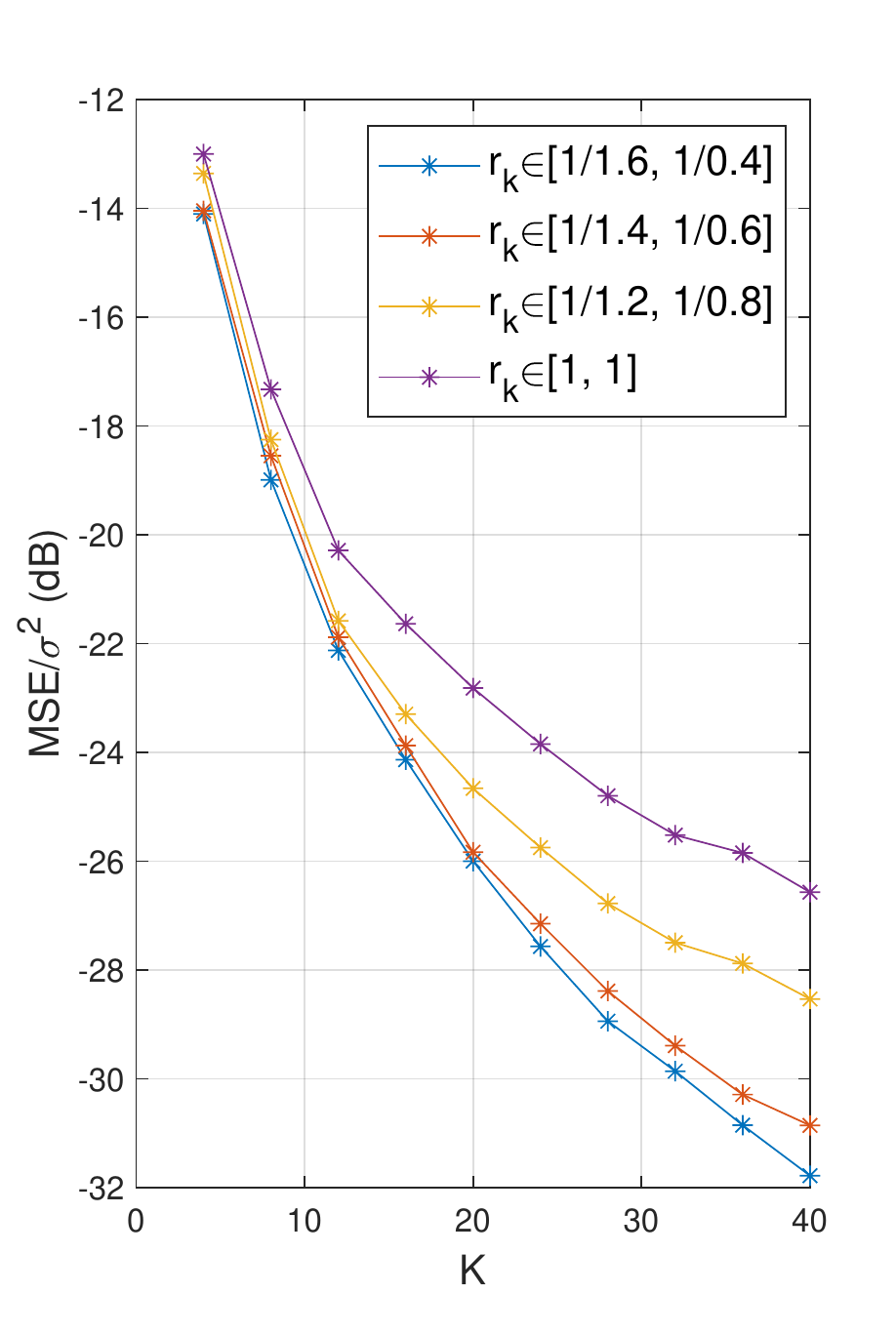}}
\subfigure[]{\includegraphics[width=0.499\columnwidth]{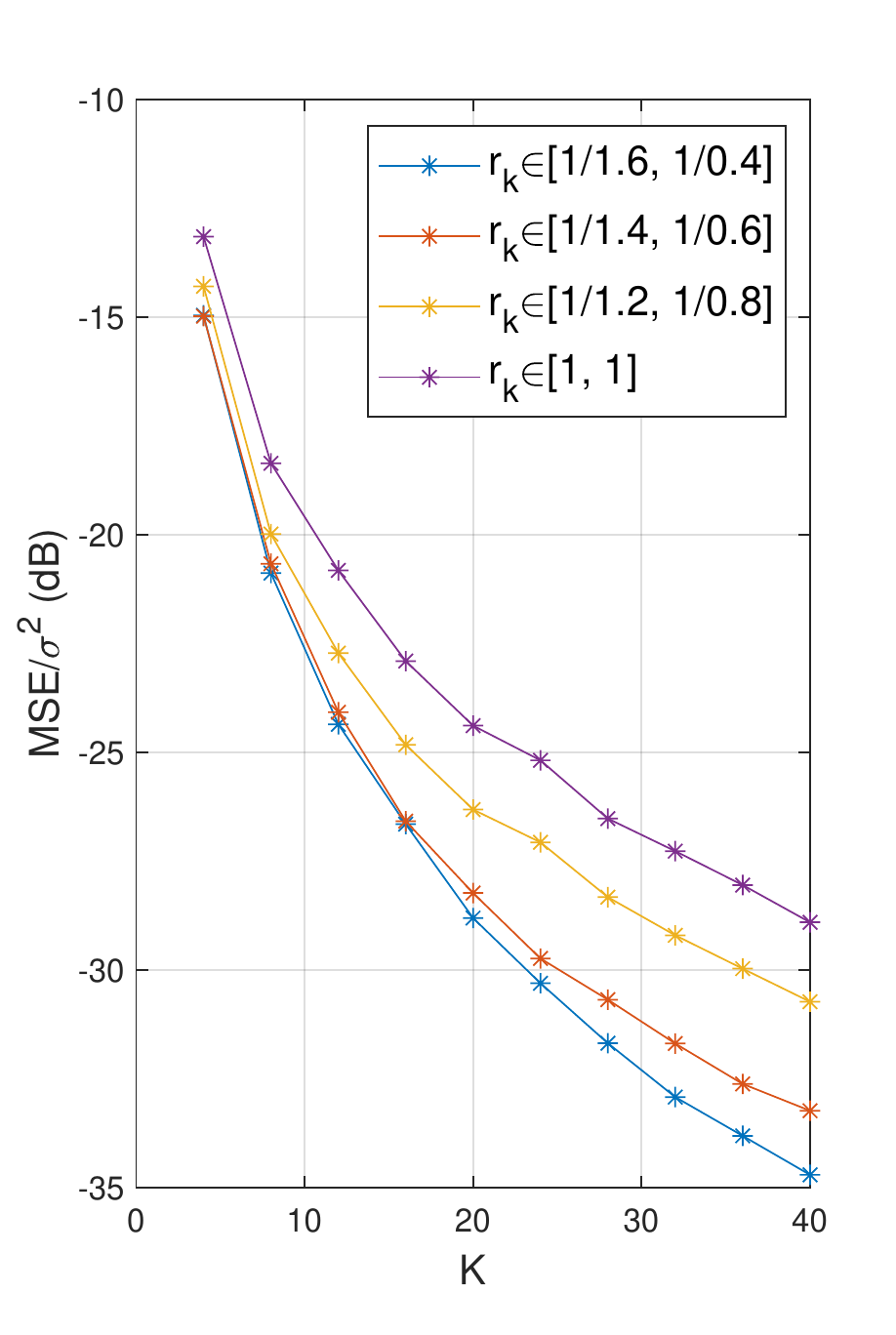}}
\caption{The impact of $r_{\max}$ and $r_{\min}$. (a) SISO scenario. (b)
MISO scenario with $N_{d}=4$. (c) SIMO scenario with $N_{t}=4$.
(d) MIMO scenario with $N_{d}=2,N_{t}=2$.}
\label{fig:MSE_ratio}
\end{figure*}

\subsection{Performance of Learning Task Using DLR}

To {investigate} the impact of DLR on the training and testing performance
of FL tasks, we utilize FL to perform classification tasks on MNIST and
CIFAR10 datasets. Fig.~\ref{fig:fl_loss_acc-1} gives the training
loss as well as the test accuracy on both datasets.
$K=20$ devices are involved in updating the global model, and the
parameter $a$ and the noise power $\sigma^{2}$ are respectively
set to $10$ dB and $0$ dB. Compared to CIFAR10 including $10$ classes
of color pictures, MNIST dataset comprising only black and white pictures
is {known to be} much more easier to learn. Thus, the accuracy trained on the MNIST can achieve $90\%$ very soon and {approach} almost $100\%$, while
the accuracy on CIFAR10 is lower with approximate $72\%$. Since the MSE with
DLR is smaller than the MSE with a fixed learning rate, its re-transmission
probability is smaller. {The DLR-based scheme} is shown to have slight higher test accuracy on both datasets compared {to} conventional methods using a fixed learning rate, i.e., $r_{\mathrm{min}}=r_{\mathrm{max}}=1$. This {is due to the fact} that a increased learning rate owing to adaption to the fading channels can help escape the saddle point which is known as the difficulty in minimizing the loss. In addition, a larger range of DLR may result in a bigger variance of the training loss and the test accuracy, which implies that proper boundaries of DLR ratio should be chosen.

{Further numerical} simulations are conducted under the MISO and MIMO scenarios with
$K=4, 12, 20$ devices, {where $N_d=4$, and $N_d=2, N_t=4$, respectively}. The DLR ratio {boundaries are set to $r_{\min}=1/1.2$ and  $r_{\max}=1/0.8$ }, and the noise power
is set to $10$ dB. The reported accuracy performance on both datasets
is given in Table \ref{tab:Rep_accuracy}. {The assumption of sufficient data suggests that the total size of data under $K=4$ and $K=12$ cases is insufficient.} As a result, the test accuracy on both datasets is lower when $4$ and $12$ devices participate in aggregating the global model, {compared with $20$ devices. It is worth noting that the reported accuracy using DLR may be smaller than that with a fixed learning rate due to the variance.}
Therefore, the simulation results in Fig. \ref{fig:fl_loss_acc-1}
and Table \ref{tab:Rep_accuracy} demonstrate that the proposed DLR can slightly improve the learning and inference performance compared with the {fixed-learning-rate-based approach}.

\begin{figure*}[!htpb]
\centering
\subfigure[]{\includegraphics[width=0.94\columnwidth]{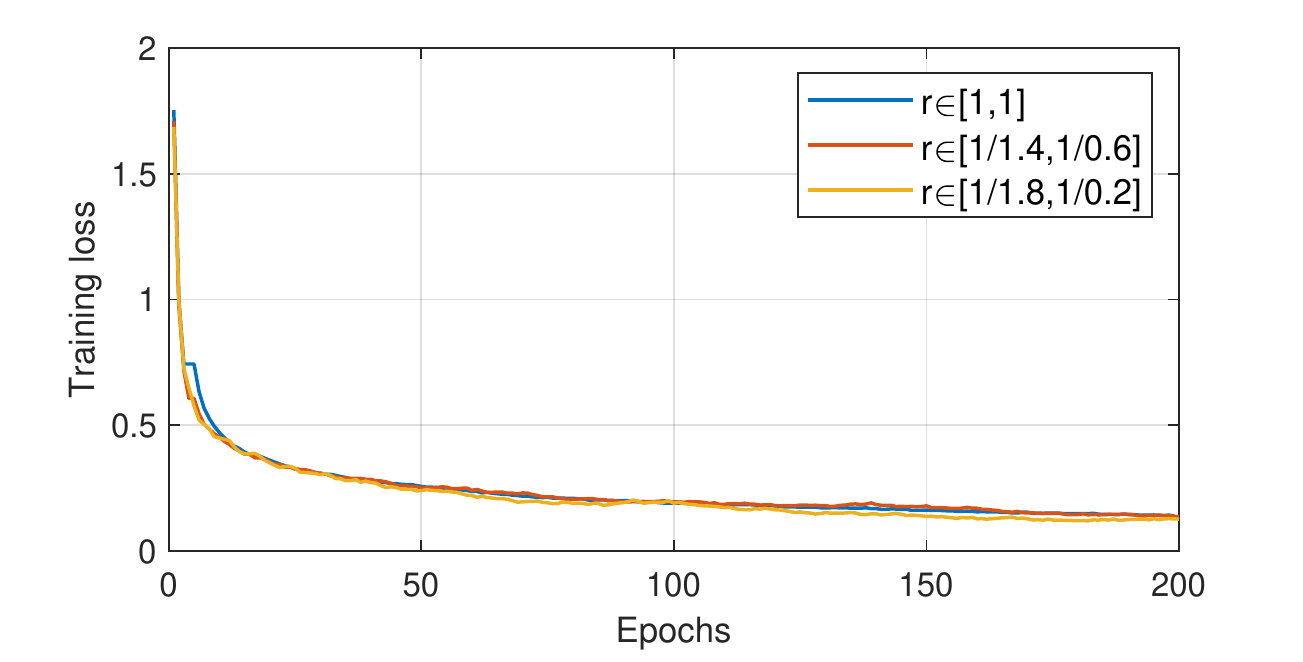}}
\subfigure[]{\includegraphics[width=0.94\columnwidth]{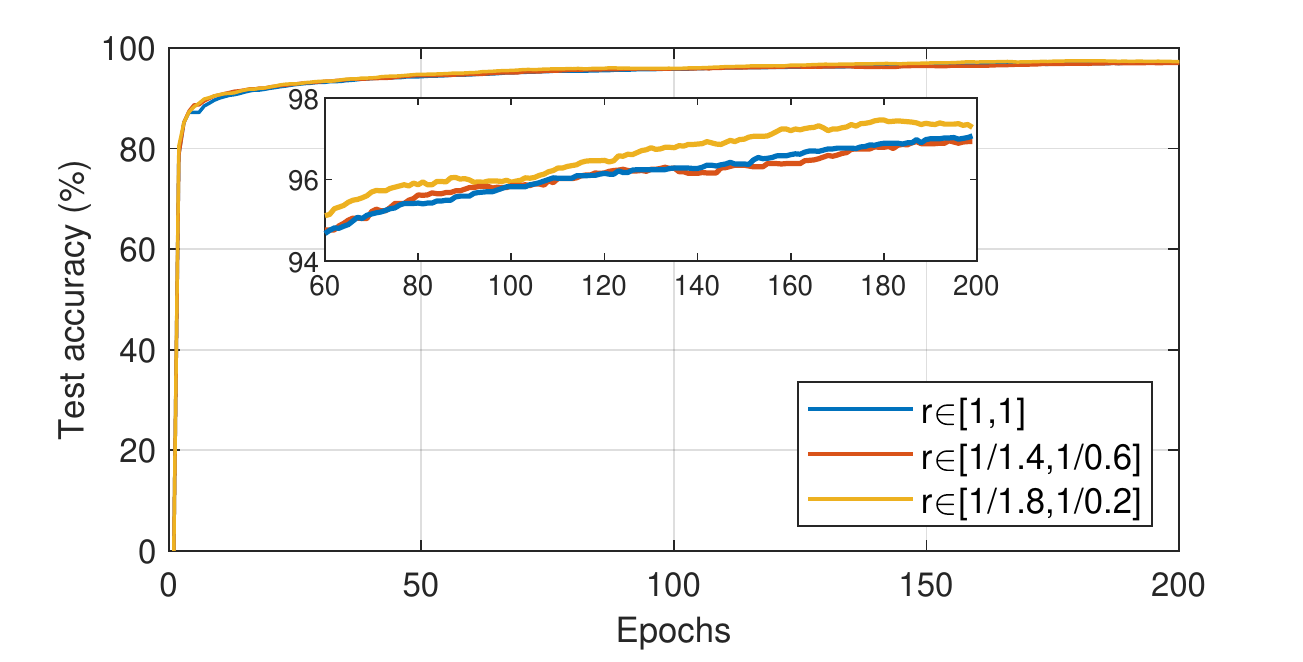}}
\subfigure[]{\includegraphics[width=0.94\columnwidth]{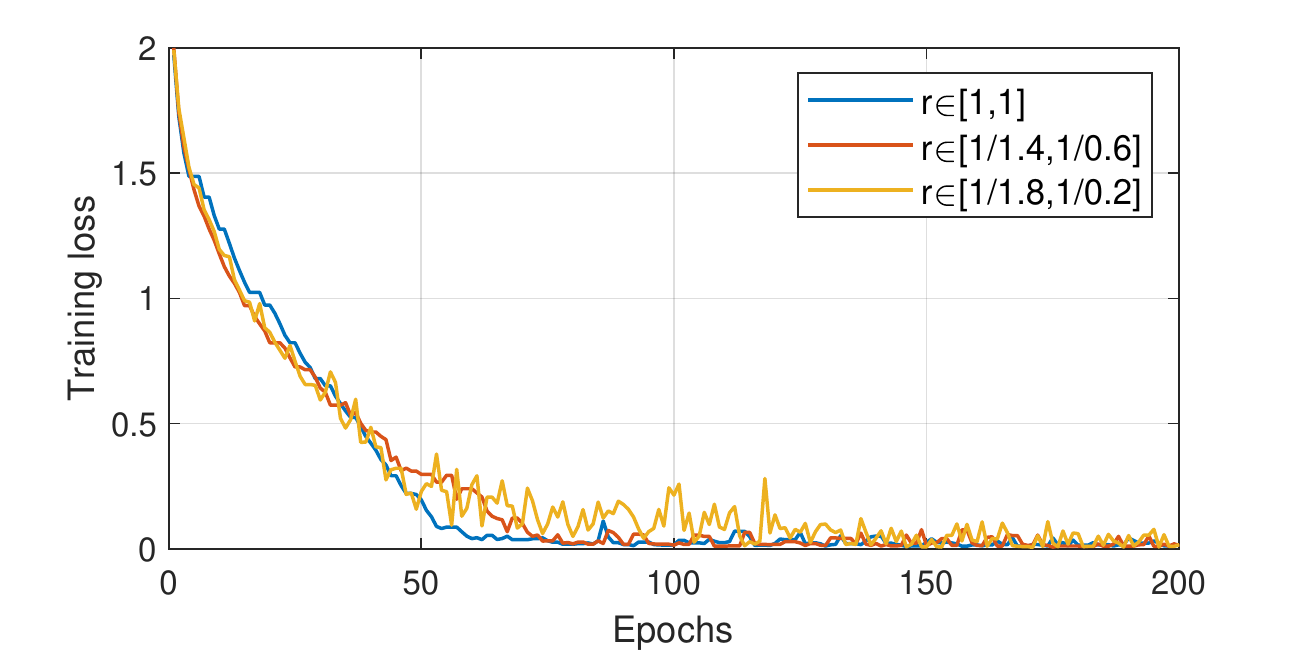}}
\subfigure[]{\includegraphics[width=0.94\columnwidth]{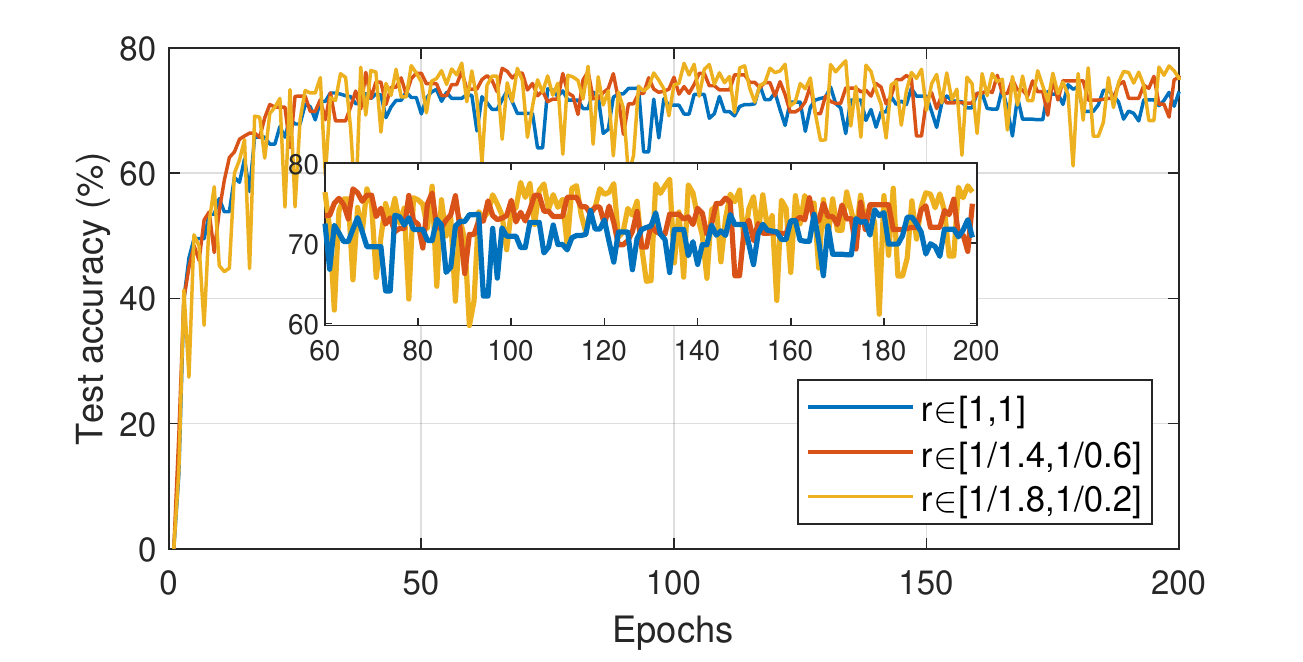}}
\caption{The training and testing performance with different $r_{\mathrm{max}}$
and $r_{\mathrm{min}}$. (a) Training loss on MNIST dataset. (b) Accuracy
performance on MNIST dataset. (c) Training loss on CIFAR10 dataset.
(d) Accuracy performance on CIFAR10 dataset.}
\label{fig:fl_loss_acc-1}
\end{figure*}

\begin{table*}[t]
\caption{Reported accuracy on MNIST and CIFAR10 with $200$ epochs under MISO
and MIMO scenarios.}
\centering%
\begin{tabular}{>{\centering}m{2.5cm}|>{\centering}m{3cm}|>{\centering}m{1cm}|>{\centering}m{1cm}|>{\centering}m{1cm}|>{\centering}m{1cm}|>{\centering}m{1cm}|>{\centering}m{1cm}}
\hline
\multirow{2}{2.5cm}{Scenario} & Number of devices & \multicolumn{2}{c|}{$K=4$} & \multicolumn{2}{c|}{$K=12$} & \multicolumn{2}{c}{$K=20$}\tabularnewline
\cline{2-2} \cline{3-3} \cline{4-4} \cline{5-5} \cline{6-6} \cline{7-7} \cline{8-8}
 & Dataset & MNIST & CIFAR10 & MNIST & CIFAR10 & MNIST & CIFAR10\tabularnewline
\hline
\multirow{2}{2.5cm}{MISO\\
$N_{d}=4$\\
} & Fixed learning rate & 91.12\% & 53.11\% & 96.26\% & 62.68\% & 97.01\% & 68.72\%\tabularnewline
\cline{2-8} \cline{3-8} \cline{4-8} \cline{5-8} \cline{6-8} \cline{7-8} \cline{8-8}
 & DLR & 93.29\% & 50.27\% & 96.35\% & 64.68\% & 97.17\% & 72.12\%\tabularnewline
\hline
\multirow{2}{2.5cm}{MIMO \\
$N_{d}=2$, $N_{t}=4$\\
} & Fixed learning rate & 93.11\% & 50.96\% & 96.71\% & 63.39\% & 96.98\% & 71.29\%\tabularnewline
\cline{2-8} \cline{3-8} \cline{4-8} \cline{5-8} \cline{6-8} \cline{7-8} \cline{8-8}
 & DLR & 93.90\% & 54.86\% & 96.66\% & 64.94\% & 96.94\% & 73.86\%\tabularnewline
\hline
\end{tabular}
\label{tab:Rep_accuracy}
\end{table*}

\subsection{Performance of the Proposed Closed-form Receive Beamforming Solution}

{Now we move on to discuss the impact of {the number of antennas} on the MSE performance and verify the asymptotic analysis and the proposed closed-form receive beamforming design. Different numbers of antennas at the devices and the {aggregator} under the MISO, SIMO and MIMO scenarios are considered. Due to the lack of space, we restrict ourselves to the case of $K=2$ and $K=4$ only.} The line
labeled as `Analysis' is the derived theoretical MSE, and `Proposed' is the performance using the proposed simple closed-form receive beamforming design.

As shown in Fig.~\ref{fig:MSE-V.S.-ant}, the increase of {the number of antennas}  leads to a {reduction of} aggregate error since higher beamforming gain is {achieved}. The performance gap shrinks between the proposed DLR and fixed learning rate methods {for $r_k\rightarrow 1$ when the number of antennas go to infinity. More specifically,} in the MISO scenario, more antennas
equipped at the devices lead to smaller differences on the channel
gain among devices. Thus the performance gain obtained by DLR is smaller
with respect to fixing learning rate. Both `DLR' and `NDLR' approach
the `Analysis' performance eventually, which verifies the analysis
under the MISO case. In the SIMO and MIMO scenarios, more antennas
at the aggregator provide more freedom to align the received signals,
which leads to the reduced performance gap between `DLR' and `DC'.

{Fig. \ref{fig:MSE-V.S.-ant} also shows that the MSE obtained by the proposed closed-form receive beamforming design is approaching the theoretical {bound} when massive antenna are applied. Besides, the MSE without consideration of DLR, i.e., `NDLR' in the MISO case and `DC' in both SIMO and MIMO cases, is getting closer {to} `Analysis', which implies that the equality of (a) and (b) in both (\ref{eq:MISO_ther}) and (\ref{eq:MIMO_ther}) are guaranteed.} Based
on the theoretical analysis in {Section~\ref{S:AAna}}, when the {the number of receive antennas (or transmit antennas)} goes to infinity, the analyzed MSE is the same
for both MIMO and MISO (or SIMO). {From Fig.~\ref{fig:MSE-V.S.-ant} we observe} that the number of transmit (or receive) {antennas} can affect
the changing {rate} of MSE curve when {the number of the} receive (or transmit) antennas
goes to infinity. Therefore, {the asymptotic analysis is verified and the effectiveness of the proposed closed-form receive beamforming design in Section~\ref{S:AAna} is confirmed.}

\begin{figure*}[!htpb]
\centering
\subfigure[]{\includegraphics[width=0.499\columnwidth]{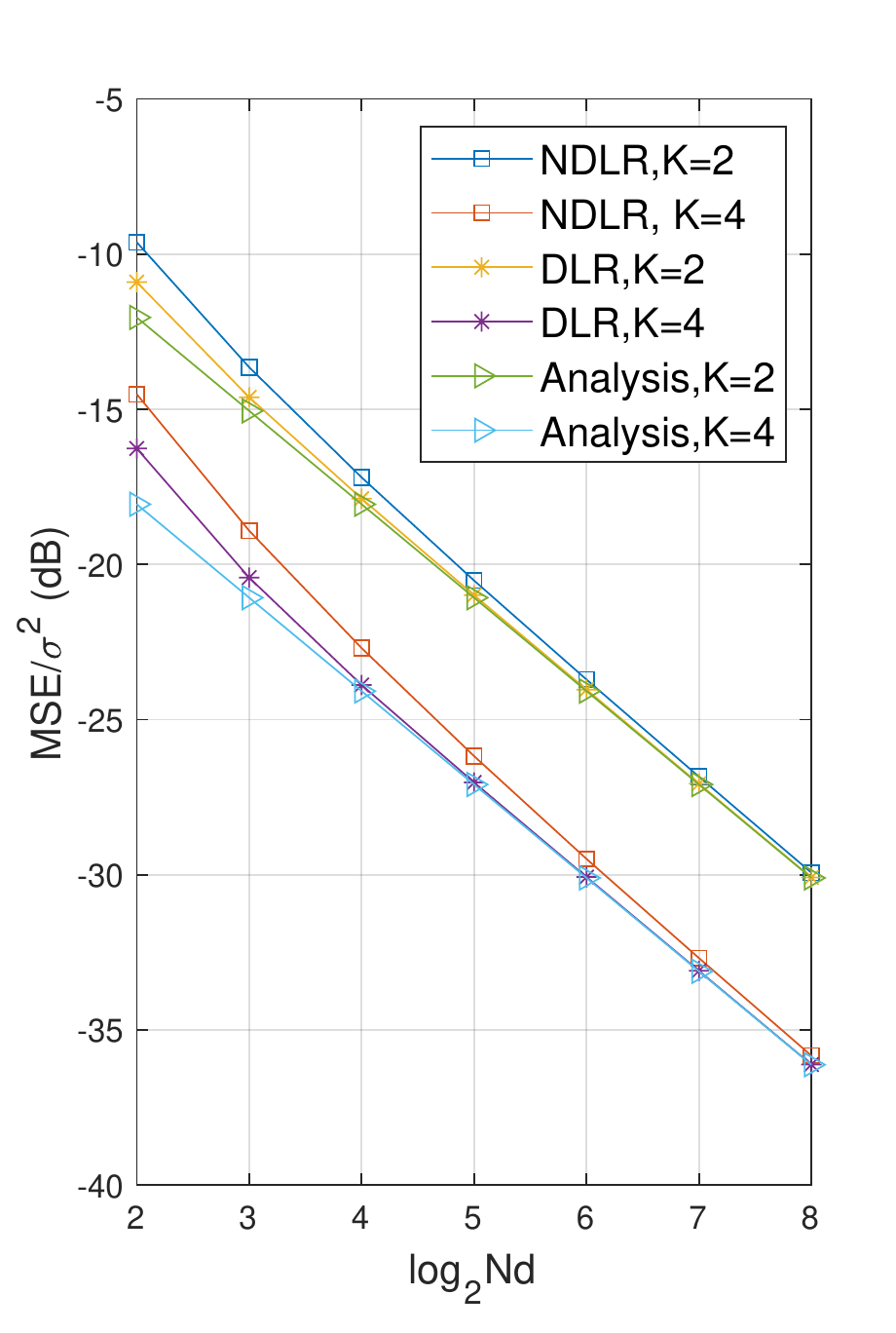}}
\subfigure[]{\includegraphics[width=0.499\columnwidth]{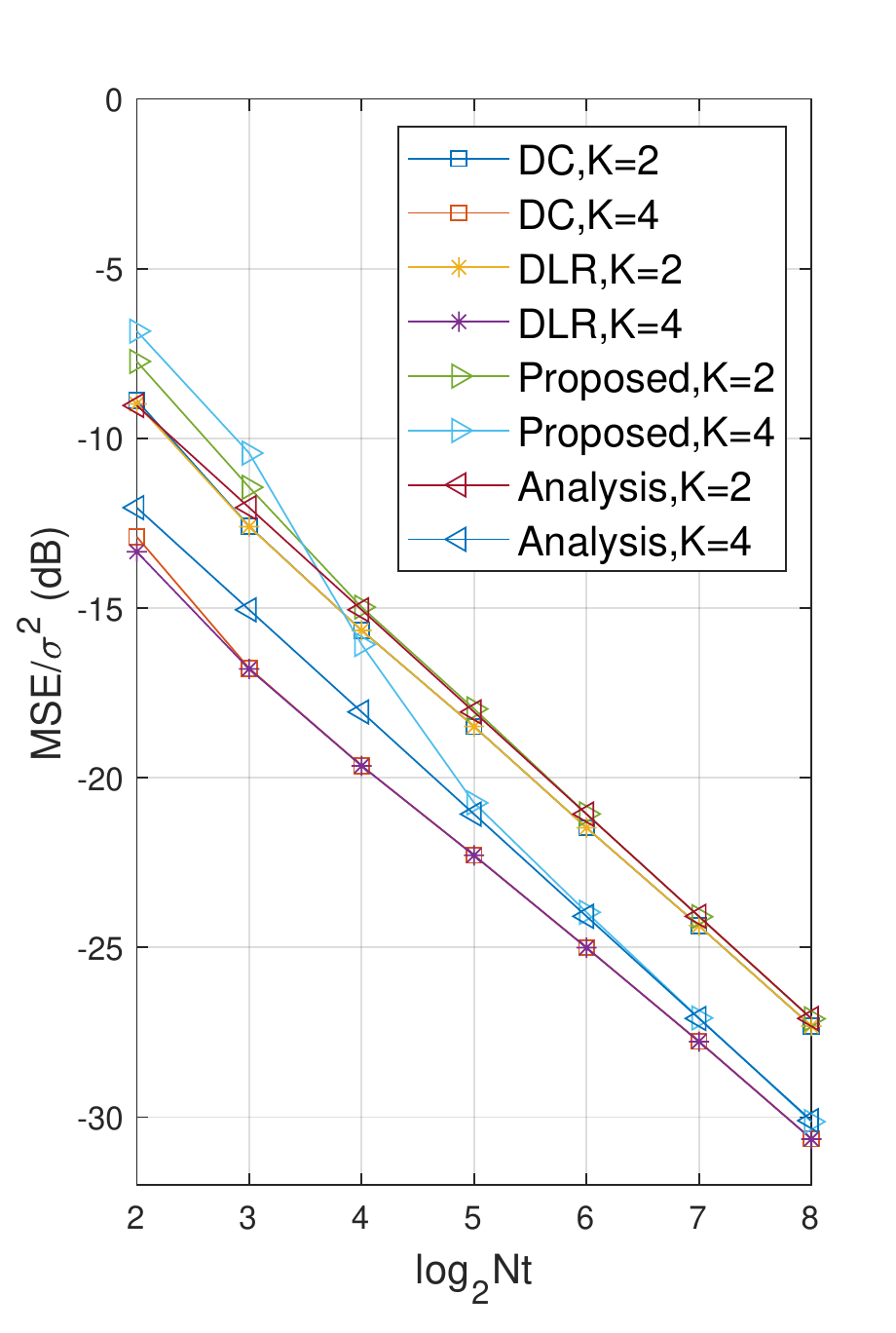}}
\subfigure[]{\includegraphics[width=0.499\columnwidth]{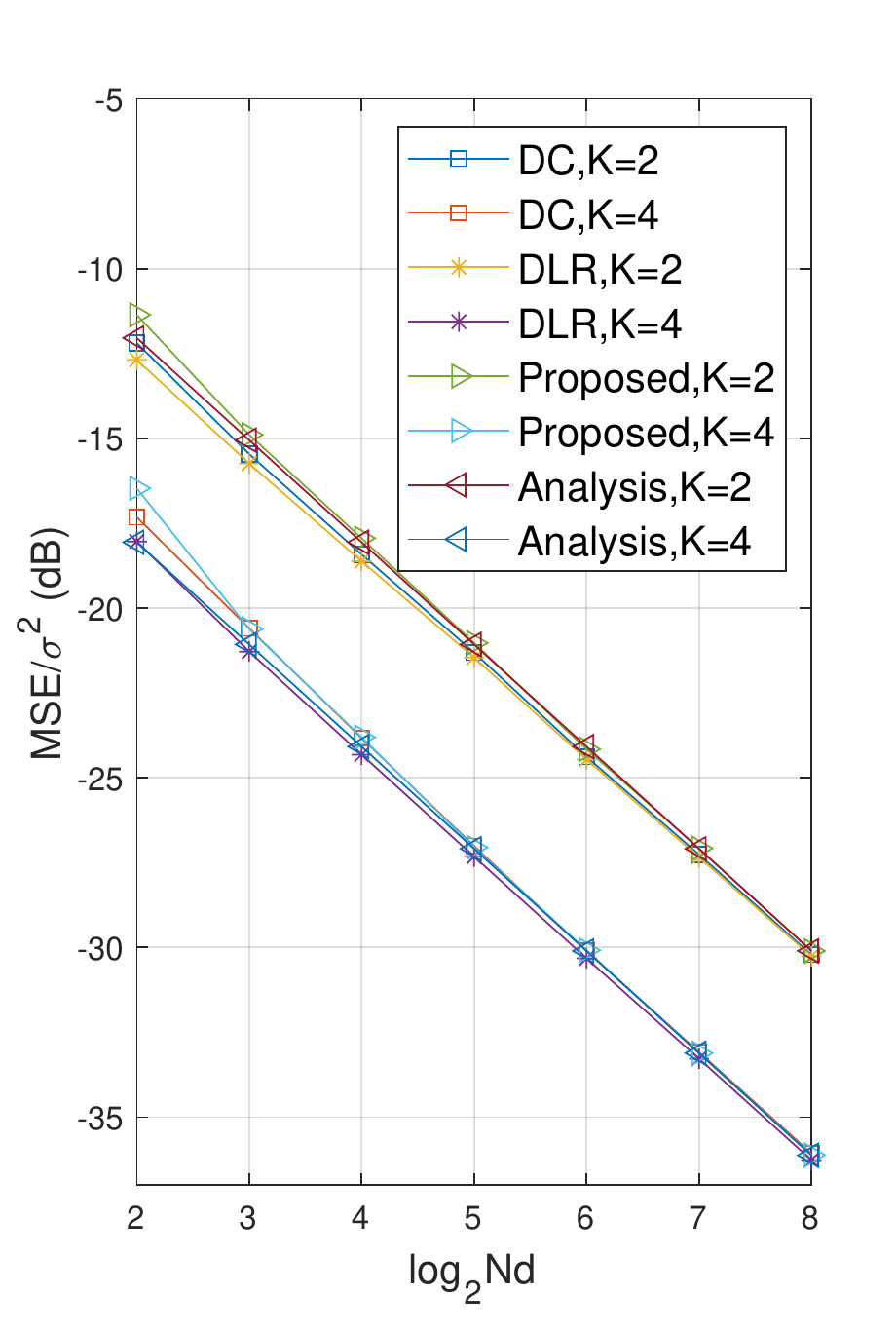}}
\subfigure[]{\includegraphics[width=0.499\columnwidth]{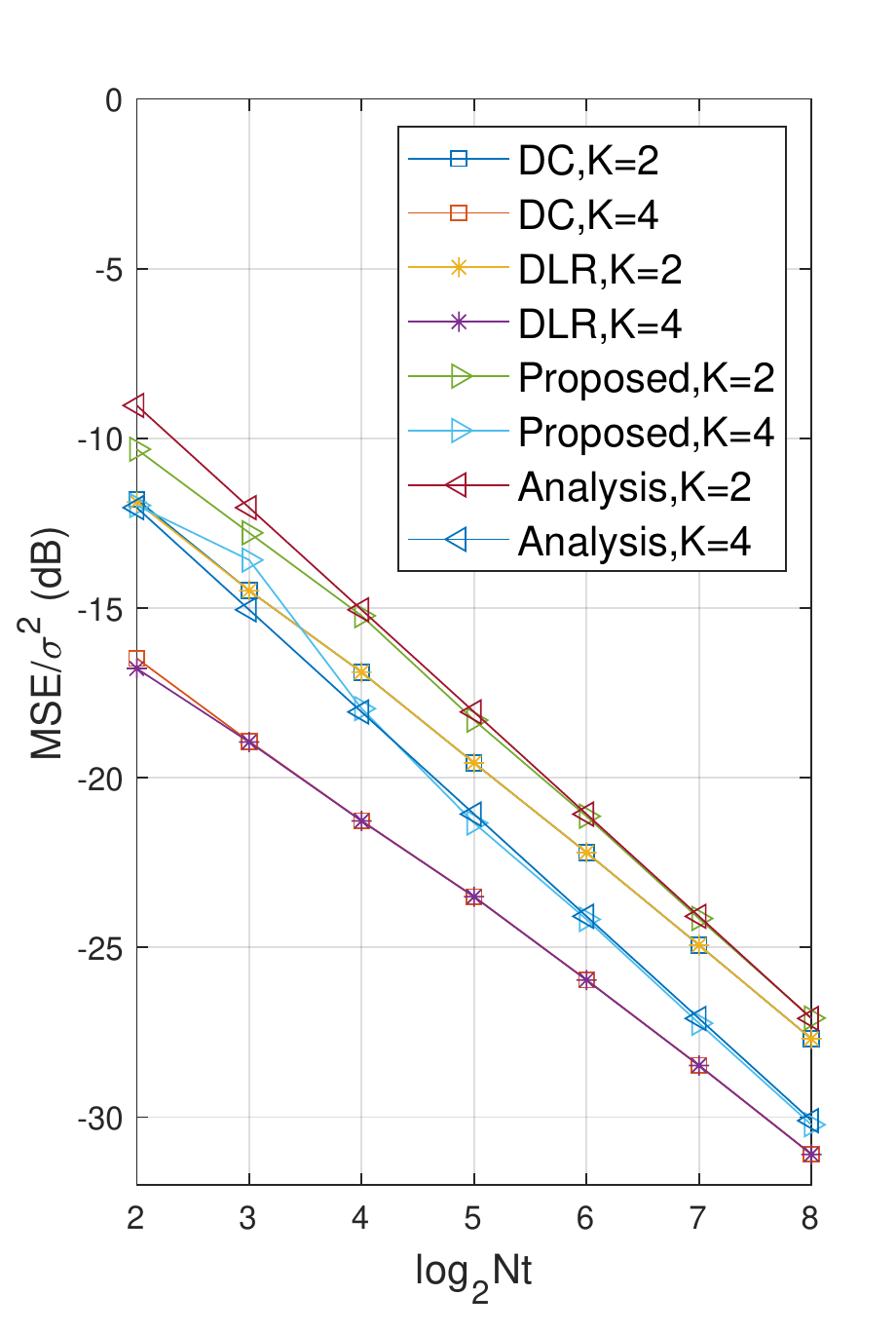}}
\caption{The impact of antenna number and the performance of the proposed closed-form
receive beamforming design. (a) MISO scenario. (b) SIMO scenario.
(c) MIMO scenario with $N_{t}=2$. (d) MIMO scenario with $N_{d}=2$.}
\label{fig:MSE-V.S.-ant}
\end{figure*}

\section{Conclusions}\label{S:conclu}

{In this paper, we {incorporated} the Aircomp technique in implementing distributed learning tasks to significantly improve the communication efficiency. Minimizing the aggregate error due to the fading and noisy channels is critical as a large error can lead to poor training and inference performance. Different from existing works that mainly {optimized} the wireless resources to align the received signals, we first {proposed} to utilize optimizable learning rates and proposed DLR to adapt to the fading channels. The problem {was} formulated under the MISO and MIMO scenarios. A closed-form solution and an iterative method {were} proposed respectively for each case. The simulation results have validated the effectiveness of the proposed DLR in terms of both the MSE performance and the testing accuracy on the MNIST and CIFAR10 datasets. Asymptotic analyses in the MISO, SIMO and MIMO scenarios {were} also provided to address the problem of massive antenna deployment as well as to derive the theoretical bounds. On this basis, a {near-optimal} and closed-form receive beamforming design {was} proposed by simply summing up the channel vectors. The feasibility and effectiveness of the proposal are verified by extensive numerical simulations.}

\bibliographystyle{IEEEtran}
\bibliography{mybib}


\end{document}